\newcommand{\E}{\mathbb{E}}
\newcommand{\figsize}{1.05}
\newcommand{\Nreq}{N_{\text{req}}}
\newcommand{\tth}{\text{th}}
\newcommand{\BS}{{\mathbf{BS}}}
\newtheorem{Lem}{Theorem}
\begin{document}

%

\title{Learning-Based Delay-Aware Caching in Wireless D2D Caching Networks}

\author{{Yi Li, Chen Zhong, M. Cenk Gursoy and Senem Velipasalar}\\
\thanks{Y. Li is with Intelligent Fusion Technology, Germantown, MD (e-mail: yli33@syr.edu).
C. Zhong, M. C. Gursoy, and S. Velipasalar are with the Department of Electrical Engineering and Computer Science,
Syracuse University, Syracuse, NY 13244 (e-mail: czhong03@syr.edu, mcgursoy@syr.edu, svelipas@syr.edu).}}


\maketitle

\thispagestyle{empty}

\begin{abstract}
Recently, wireless caching techniques have been studied to satisfy lower delay requirements and offload traffic from peak periods. By storing parts of the popular files at the mobile users, users can locate some of their requested files in their own caches or the caches at their neighbors. In the latter case, when a user receives files from its neighbors, device-to-device (D2D) communication is performed. D2D communication underlaid with cellular networks is also a new paradigm for the upcoming wireless systems. By allowing a pair of adjacent D2D users to communicate directly, D2D communication can achieve higher throughput, better energy efficiency and lower traffic delay. In this work, we propose an efficient learning-based caching algorithm operating together with a non-parametric estimator to minimize the average transmission delay in D2D-enabled cellular networks. It is assumed that the system does not have any prior information regarding the popularity of the files, and the non-parametric estimator is aimed at learning the intensity function of the file requests. An algorithm is devised to determine the best $<$file,user$>$ pairs that provide the best delay improvement in each loop to form a caching policy with very low transmission delay and high throughput. This algorithm is also extended to address a more general scenario, in which the distributions of fading coefficients and values of system parameters potentially change over time. Via numerical results, the superiority of the proposed algorithm is verified by comparing it with a naive algorithm, in which all users simply cache their favorite files, and by comparing with a probabilistic algorithm, in which the users cache a file with a probability that is proportional to its popularity.
\end{abstract}

\begin{IEEEkeywords}
content caching, delay awareness, device-to-device (D2D) communications, intensity estimation, kernel learning.
\end{IEEEkeywords}


\IEEEpeerreviewmaketitle


\section{Introduction}

Mobile data traffic has grown more than 18-fold over the past five years \cite{Cisco}. In addition, as demonstrated by the survey results provided in \cite{Cisco}, smartphones represented only $45\%$ of the total mobile devices and connections in 2016, but generated $81\%$ of the total mobile traffic. In the same year, smartphone usage also grew by $38\%$. Moreover, these trends are also expected to continue in the near future. As predicted in \cite{Cisco}, by 2021, annual global mobile data traffic will exceed half a zettabyte,  representing a growth of $7$ times of that in 2016. This growth is primarily fueled by the mobile video traffic. Indeed, video, which is cacheable, already accounts for $60\%$ of the mobile data traffic and is expected to be $78\%$ of the mobile data traffic by 2021. At the same time, the storage capacity of mobile devices is growing, leading to the availability of more cache space. These trends and predictions motivate the deployment of wireless device-to-device (D2D) caching networks in which popular contents can be cached beforehand and D2D transmissions can be enabled to exchange cached files with the goal to achieve lower delays in communication while offloading traffic from peak periods and congested links.

Recently, many studies have been conducted to analyze caching strategies in wireless networks in order to satisfy the throughput, energy efficiency and latency requirements in next-generation 5G wireless systems. By storing parts of the popular files at the base station and users' devices, network traffic load can be managed/balanced effectively, and traffic delay can be greatly reduced. A brief overview of wireless caching was provided in \cite{WL_caching_overview}, which introduced the key notions, challenges, and research topics in this area. In order to improve the performance effectively, the system needs to learn and track the popularity of those cacheable contents, and predict the popularity variations, helping to guarantee that the most popular contents are cached and the outdated contents are removed. In \cite{LivingOnTheEdge}, popularity matrix estimation algorithms were studied for wireless networks with proactive caching.

Multiple caching strategies have been investigated in the literature, which improve the performance in different ways. When contents are cached at the base stations, the energy consumption, traffic load and delay of the backhaul can be reduced \cite{ClusterContentCaching}, and the base stations in different cells can cooperate to improve the spectral efficiency gain \cite{PhyCaching5G}. When contents are cached at the users' devices, the base station can combine different files together and multicast to multiple users, and the users can decode their desired files using their cached files. A content distribution algorithm for this approach was given in \cite{MulticastingCodingRandomDemands}, and the analysis of the coded multicasting gain was provided in \cite{CodingforCaching}.

Besides caching, D2D communication underlaid with cellular networks has been intensively studied recently. In D2D communication, users can communicate directly without going through the base station. The advantages of D2D communications were studied in \cite{D2D_KB}, and it was shown that D2D communication could greatly enhance the spectral efficiency and lower the latency. A D2D-based heartbeat relaying framework was proposed to reduce signaling traffic and energy consumption in heartbeat transmission in \cite{jin2017reducing}. A comprehensive overview was provided in \cite{D2D_survey}, where different modeling assumptions and key considerations in D2D communications were detailed. In a D2D cellular network, users can choose to work in different modes. In cellular mode, users communicate through the base station just as cellular users; while in D2D mode, users communicate directly. Mode selection is a critical consideration in D2D communications, and many studies have been conducted in this area. For example, in \cite{mode_selection_DK}, mode selection problem was studied for a system with one D2D pair and one cellular user, and in \cite{Joint_selection}, a joint mode selection and resource allocation algorithm was proposed. Recently in \cite{li2016device}, mode selection and optimal resource allocation in D2D networks were studied under statistical queueing constraints.

In the literature, several studies have been performed to combine content caching with device-to-device (D2D) wireless networks. In such cases, a user can receive from its neighbors if these have cached the requested content. An overview on wireless D2D caching networks was provided in \cite{WirelessD2DCachingNetworks}, in which the key results for different D2D caching strategies were presented. To design caching policies for the wireless D2D network, the authors of \cite{MobileCachingD2DContentDelivery} proposed a caching policy that maximizes the probability that requests can be served via D2D communications. For a similar system setting, a caching policy that maximizes the average number of active D2D links was obtained in \cite{golrezaei2012wireless}. More recently, probabilistic caching policies have been applied in D2D caching problems. For instance, the authors in \cite{chen2017optimal} jointly optimize the probabilistic caching policy for D2D users and scheduling policies to maximize the successful offloading probability, and show remarkable improvement in offloading gain. In \cite{chen2017probabilistic}, an alternative optimization approach was proposed for the probabilistic caching placement aiming at maximizing the cache hit rate and cache-aided throughput. As another line of work, comparison between D2D caching and small cell caching was provided in \cite{chen2016d2d} and how the user density and content popularity distribution influence the caching performance was studied. This work was based on stochastic geometry models, in which nodes/users were distributed randomly.
We note that many works only tackle a simple case in which users have identical popularity vectors which is also not practical. In this paper, to address the D2D caching problem considering the channel fading and with unknown content popularity distribution, we design a caching algorithm that minimizes the average delay of the network, and also introduce an algorithm to learn the arrival intensity of the file requests. Since the users' requests may not belong to a particular family of distributions, a nonparametric estimation method \cite{tanner2001statistics} is needed in general. Most commonly used nonparametric estimators, including histogram, splines, wavelets, and kernel density estimator, are introduced in \cite{fan1996local}. Histogram is the oldest and most basic density estimator, which counts the number of samples falling into each bin \cite{silverman1986density}. The problems associated with this method are that the different choices of the size and initial points of bins have significant influence on the histogram, and the histogram is not smooth. In order to approximate smooth functions, spline functions are used. Rather than bins, the samples are fitted into a set of spline basis functions \cite{eubank1999nonparametric}. In \cite{donoho1995wavelet}, a wavelet based density estimation was introduced to achieve fast computation. In our work, we use the kernel density estimator for the intensity estimation. Kernel density estimator was first proposed in \cite{rosenblatt1956remarks} to provide smooth density functions. We introduce the kernel density function and describe our approach and design in detail in Section \ref{Sec:int_est}.

Our main contributions in this paper can be listed as follows:
\begin{enumerate}
\item We present an optimized kernel density estimator for the caching problem, which makes the proposed algorithm a model-free algorithm.
\item We provide a characterization of the average delay in both cellular and D2D modes, and our caching algorithm has the goal to minimize the average delay of the system, which is a critical quality of service requirement in especially delay-sensitive applications.
\item We propose an efficient and robust algorithm to solve the delay minimization problem. Our algorithm is applicable in settings with very general popularity models, with no assumptions on how file popularity varies among different users.
\item We further extend our algorithm to a more general setting, in which the system parameters and the distributions of channel fading change over time.
\end{enumerate}

The remainder of the paper is organized as follows. System model is described in Section \ref{Sec:formulation} and the problem formulation is provided in Section \ref{Sec:problem}. In Section \ref{Sec:int_est}, we present the intensity estimation algorithm while delay-aware caching algorithm is developed and described in Section \ref{Sec:algorithm}. Extensions with broadcasting in the transmission phase are discussed in Section \ref{Sec:broadcast}. Finally, numerical results are given in Section \ref{Sec:numerical} and conclusions are drawn in Section \ref{Sec:conclusion}.


\section{System Model}\label{Sec:formulation}

\subsection{System Model and Channel Allocation} \label{sec:model}
\begin{figure}
	\centering
	 \includegraphics[width=0.45\textwidth]{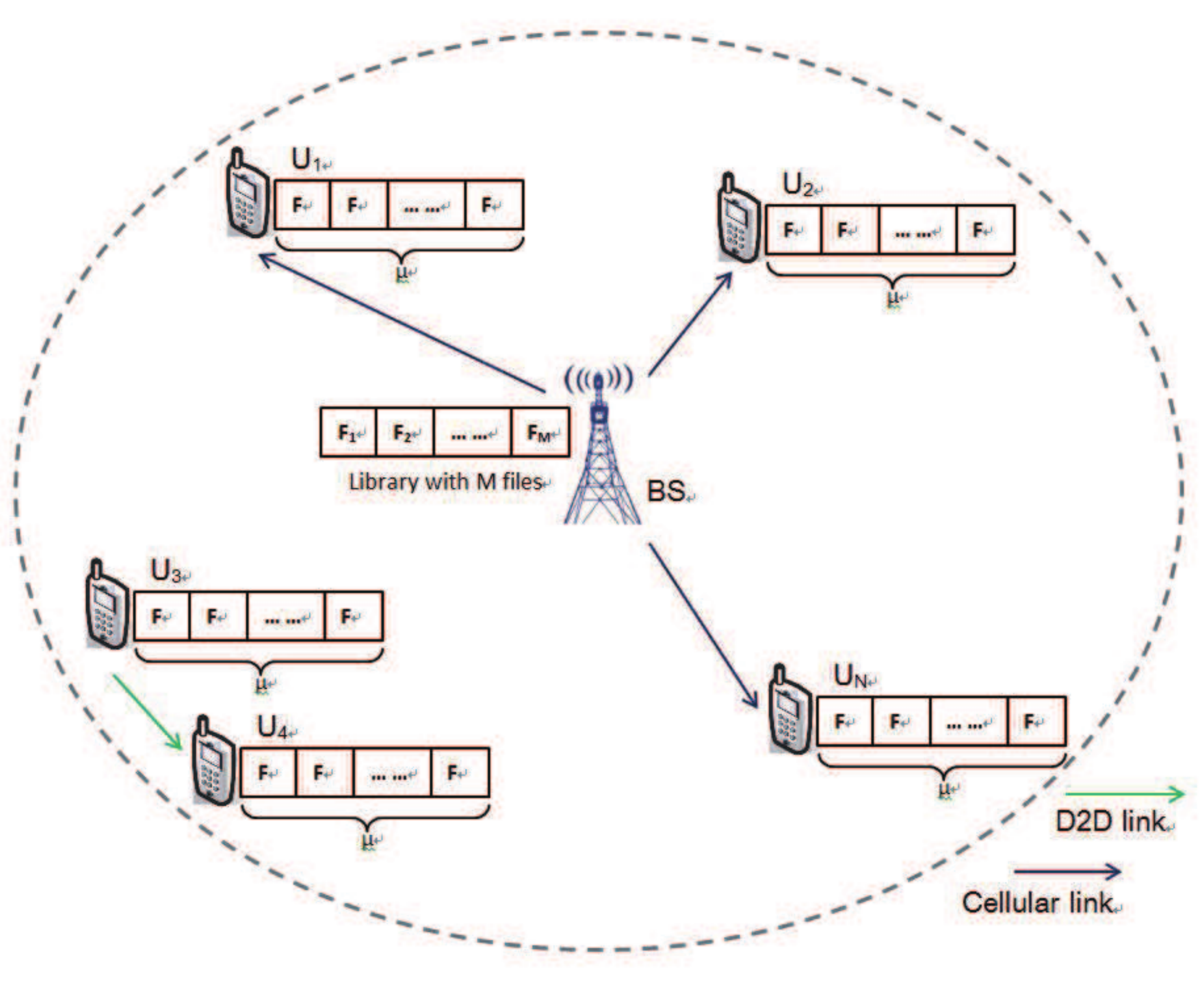}
	\caption{System model of a D2D cellular network with caches}\label{fig:system}
\end{figure}

As shown in Figure \ref{fig:system}, we consider a cellular network with one base station ($\BS$), in which a library of $M$ files ($\mathcal{F}_1$, $\mathcal{F}_2$, $\cdots$, $\mathcal{F}_M$) is stored, and we assume that the size of each file is fixed to $F$ bits \footnote{In the literature, it is noted that the base station may only store a portion of the library contents, and needs to acquire the remaining files from the content server \cite{WL_caching_overview}. Since we focus on the wireless transmission delay, we do not explicitly address the link between the base station and content server. Also, if the content files do not have the same size, we can further divide them into sub-files with equal size.}. There are $N$ users ($U_1$, $U_2$, $\cdots$, $U_N$) in the network who seek to get the content files from the library. Each user is equipped with a cache of size $\mu F$ bits, and therefore can store $\mu$ content files. The caching state is described by an $N\times M$ matrix $\mathbf{\Phi}$, whose $(i,j)$-th component has a value of $\phi_{i,j}=1$ if file $\mathcal{F}_j$ is cached at user $U_i$, and $\phi_{i,j}=0$ when the user $U_i$ does not have file $\mathcal{F}_j$ in its cache.

A discrete-time system is considered in this work, and the duration of each time frame is fixed at $T_0$. Throughout this work, we use $\kappa$ as the index of time frame, and denote $t_{\kappa}$ as the time instant when the $\kappa^{\tth}$ time frame begins. In practice, the contents on the cloud server are updated periodically, and the behavior of each user also follows a certain periodic pattern related to the update of the contents. For instance, the news websites may update several times a day, and the users often visit these news websites everyday following their update. Here we assume that the behavior of each user follows a periodic pattern with a period of $T_p$ time frames. More specifically, $\lambda_{i,j}(t)$, which represents the intensity function of the requests generated by user $U_i$ for file $\mathcal{F}_j$, is a continuous function of time with a period of $T_pT_0$.

It is assumed that the system also operates in a periodic pattern, whose period is far less than the period of the intensity function of the requests, so that the system can always keep updating as the users' request intensities are changing over time. At the beginning of each operation cycle, the base station collects information from the users, including the caching states, locations, maximum transmission powers, and estimated average numbers of requests, and runs the caching algorithm to determine the contents cached at each user. Then, the users update their caches and the system starts serving the requests. An operation period ends as soon as the base station starts running the caching algorithm again. We use $\upsilon$ as the index of operation cycle, and denote the duration of the $\upsilon^{\tth}$ operation cycle as $\tau_{\upsilon}$ time frames. With this notation, we can express the average number of requests generated by user $U_i$ requesting file $\mathcal{F}_j$ in the $\upsilon^{\tth}$ operation cycle as
\begin{align}\label{eq:nreq_ij}
\E\left\{\Nreq^{i,j}[\upsilon]\right\}=\int_{t_{\upsilon}}^{t_{\upsilon}+T_0\tau_{\upsilon}}\lambda_{i,j}(t)dt,
\end{align}
where $t_{\upsilon}$ represents the time instant in the continuous time domain at which the $\upsilon^{\tth}$ operation cycle starts. Then, the average number of requests generated by the $i^{\tth}$ user is
\begin{align}\label{eq:nreq_i}
\E\left\{\Nreq^{i}[\upsilon]\right\}=\sum_{j=1}^{M}\E\left\{\Nreq^{i,j}[\upsilon]\right\},
\end{align}
and the average number of requests received at the base station in the $\upsilon^{\tth}$ operation cycle is
\begin{align}\label{eq:nreq}
\E\left\{\Nreq[\upsilon]\right\}=\sum_{i=1}^{N} \E\left\{\Nreq^{i}[\upsilon]\right\}=\sum_{i=1}^{N}\sum_{j=1}^{M}\E\left\{\Nreq^{i,j}[\upsilon]\right\}.
\end{align}
On the device of each user, it is assumed that a certain learning algorithm is installed in order to learn the user behavior. In other words, the request intensity functions $\lambda_{i,j}(t)$ are estimated at each user, and the average number of requests generated in each operation cycle can be evaluated based on the intensity estimation. The base station collects the estimates of $\E\left\{\Nreq^{i,j}[\upsilon]\right\}$ from each user through a special channel at the beginning of the $\upsilon^{\tth}$ operation cycle.

In a D2D-enabled wireless network, users can choose to transmit in cellular mode or D2D mode. In the cellular mode, users request and receive information from the base station, while in the D2D mode, a user requests and receives information from another user through a D2D direct link. In our model, the users first check their local cache when a file is requested. If the user does not have the corresponding file in its own cache, it sends a request to the base station. We assume that the base station has knowledge of all fading \emph{distributions} (i.e., only has statistical information regarding the channels) and the cached files at each user. After receiving the request, the base station identifies the source node from which the file request can be served and allocates channel resources to the corresponding user. Therefore, the result of mode selection is determined by the result of source selection. If the source node is another user, then the requested file is sent over the direct D2D link and hence the communication is in D2D mode, otherwise the receiving user works in cellular mode and receives files from the base station. In source selection, among all the nodes (including the base station) who have the requested file, the node with the lowest average transmission delay to the receiver is selected as the transmitter. If the $i^{\text{th}}$ user is selected as a D2D transmitter, its maximum transmission power is denoted by $P_i^{\upsilon}$ in the $\upsilon^{\tth}$ operation cycle. The base station can serve multiple requests simultaneously using different channels, and its maximum transmission power is fixed at $P_b$ for each request.

In this work, we consider an OFDMA system with $N_c$ orthogonal channels, and the bandwidth of each channel is $B$. We assume that the background noise samples follow independent and identically distributed (i.i.d.) circularly-symmetric complex Gaussian distribution with zero mean and variance $\sigma^2$ at all receivers in all frequency bands, and the fading coefficients of the same transmission link are i.i.d. in different frequency bands. The fading coefficients are assumed to stay constant within one time frame, and change across different time frames. In this work, only the distributions of the fading coefficients are required at the base station, which mainly depend on the environment and the location of each user. A centralized computation scheme is used, and the base station sends the results of caching and scheduling algorithms to the users through additional control channels \footnote{The downlink and uplink control channels are available in 4G, LTE and 5G new radio (see e.g., \cite{ghosh2007uplink, wang2017uplink, love2008downlink,miao2017physical, UL17, DL17})}.
The average delay of the system depends on the resource allocation algorithm. In order to characterize the average delay of the system with caches at users, we provide a delay analysis in Section \ref{Sec:delay} below for a class of simple scheduling algorithms. We summarize the resource allocation assumptions for the delay evaluation as follows:
\begin{enumerate}
  \item Each channel can be used for the transmission of one requested file at most, and the transmission of a file cannot occupy multiple channels.
  \item D2D transmitters are not allowed to transmit to multiple receivers simultaneously. In other words, the file requests whose best source node is a user who is already transmitting cannot be assigned a channel resource by the base station.
  \item After a request is served, the corresponding transmitter keeps silent in the remaining time block, and the base station allocates the channel resource to other requests at the beginning of the following time block.
  \item All requests can be served in the same operation cycle, in which they are generated.
\end{enumerate}

Note that our intensity estimation and caching algorithms proposed in this work do not rely on these assumptions. Only the delay analysis in Section \ref{Sec:delay} requires them. The first three assumptions describe a class of simple scheduling algorithms, in which only point-to-point transmissions without spectrum reusing are considered. At the beginning of each time frame, base station assigns available channels to the requests, and each transmission link gets one channel at most, and uses the assigned channel exclusively. The transmitter transmits until the request is served and then releases the channel resource. For this type of scheduling algorithm, we provide average delay characterizations in Section \ref{Sec:delay}. In practice, more complicated scheduling algorithms might be used, and the delay analysis given in Section \ref{Sec:delay} needs to be updated in such a situation. However, the intensity estimation and caching algorithms are still valid if we can estimate the average delay between users. In that case, we need to estimate the average delay between users through simulation or learning methods, before calling the caching algorithm.

The last assumption described a light traffic load situation. In our setting, we can predict the average arrival number of the requests in an operation cycle via intensity estimation. However, the average packet delay should depend on the average number of requests that are served in an operation cycle, which is difficult to obtain, especially when complicated scheduling algorithms are employed. With this assumption, it is straightforward to have that the average number of requests that are served is equal to the average number of generated requests in an operation cycle. For a more general setting without this light traffic load assumption, the system should have a method to estimate the average number of requests that are served.

\subsection{Transmission Delay}\label{Sec:delay}
In this work, we use the transmission delay, which is defined as the number of time frames used to transmit a content file, as the performance metric. With this metric, the shortened transmission delay is guaranteed to reduce the waiting time for a request to be allocated a transmission resource. From the above discussion, the instantaneous channel capacity a transmission link in the $\kappa^{\text{th}}$ time frame is
\begin{align}
C[\kappa]=B\log_2\left(1+\frac{P_t}{B\sigma^2}z_\kappa\right)\hspace{0.4cm} \text{bits}/\text{s}
\end{align}
where $P_t$ is the transmission power, and $z_\kappa$ is the magnitude square of the corresponding fading coefficient in the $\kappa^{\text{th}}$ time frame. In order to maximize the transmission rate, all transmitters transmit at the maximum power level. Therefore,
\begin{align}
P_t=
\begin{cases}
P_b\hspace{1.3cm} \text{if the transmitter is the base station}\\
P_i^{\upsilon_\kappa}\hspace{1cm} \text{if the transmitter is the $i^{\text{th}}$ user}
\end{cases},
\end{align}
where $\upsilon_\kappa$ represents the index of operation cycle that contains the $\kappa^{\tth}$ time frame. When the four assumptions described in Section \ref{sec:model} hold, the duration to send a file is
\begin{align}
T=\min\left\{\tilde{t}:F\leq\sum_{\kappa=1}^{\tilde{t}} T_0 C[\kappa]\right\} \label{eq:transmissiondelay}
\end{align}
where $F$ is the size of each file, $T_0$ is the duration of each time frame, and $C[\kappa]$ is the instantaneous channel capacity in the $\kappa^{\text{th}}$ time frame. When the fading distribution is available, the average transmission delay of the link $U_i-U_j$, which is denoted by $\E\{T_{i,j}\}$, can be obtained through numerical methods or Monte-Carlo simulations. These average delay values can be stored in an $N\times N$ symmetric matrix $\bold{T_{\text{avg}}}$, whose component on the $i^{\text{th}}$ row and $j^{\text{th}}$ column is given by $\E\{T_{i,j}\}$ when $i\neq j$, and the diagonal element $\E\{T_{i,i}\}$ is the average delay between $U_i$ and the base station. According to our channel assumptions, the average delays of a transmission link are the same in every channel. Therefore, we only need to analyze the performance in a single channel.

The best source node of the request, which is generated by user $U_i$ requesting file $\mathcal{F}_j$, is the node which has file $\mathcal{F}_j$ and the smallest average transmission delay to $U_i$, and this minimum average delay is denoted by $D_{i,j}$ \footnote{If $U_i$ has cached $\mathcal{F}_j$, then the best source node is $U_i$ itself, and $D_{i,j}=0$.}. The best source of each possible request can be stored in an $N\times M$ table $\bold{S}$, in which each row corresponds to a user who generates the request, and each column corresponds to a file being requested. Also, these $D_{i,j}$ values can be collected in an $N\times M$ matrix $\bold{D}$. It is assumed that the matrices $\bold{T_{\text{avg}}}$ and $\bold{D}$ are constant within an operation cycle, and they are evaluated at the beginning of each operation cycle.

When a complicated scheduling algorithm is used, we need to estimate the average delay between users through simulation or learning methods in order to obtain the $\bold{T_{\text{avg}}}$ and $\bold{D}$ matrices. Once these two delay matrices are obtained, we can directly run our caching algorithm.

\section{Problem Formulation}\label{Sec:problem}
In the previous section, we have described the average delay matrices $\bold{T_{\text{avg}}}$ and $\bold{D}$. In this section, we formulate and discuss our caching problem. Using the $\bold{D}$ matrix, the average transmission delay of the requests generated in the $\upsilon^{\tth}$ operation cycle can be obtained as
\begin{align}
\eta[\upsilon] &=\frac{\sum_{i=1}^{N} \sum_{j=1}^{M} \E\left\{\Nreq^{i,j}[\upsilon]\right\} D^{\upsilon}_{i,j}}{\E\left\{\Nreq[\upsilon]\right\}} \notag \\
     &=\sum_{i=1}^{N} \sum_{j=1}^{M} \omega_{i,j}^{\upsilon}D^{\upsilon}_{i,j} \label{eq:eta}
\end{align}
where
\begin{align}\label{eq:w_ij}
\omega_{i,j}^{\upsilon}=\frac{\E\left\{\Nreq^{i,j}[\upsilon]\right\}}{\E\left\{\Nreq[\upsilon]\right\}}
\end{align}
represents the weight of the requests generated by user $U_i$ requesting file $\mathcal{F}_j$. In this work, our goal is to minimize the average content delay, and our caching problem is formulated as
\begin{align}
\textbf{P1:}\hspace{1cm}&\underset{\mathbf{\Phi}^{\upsilon}}{\text{Minimize}}\hspace{1.2cm} \eta[\upsilon]\\
&\text{Subject to} \hspace{0.7cm} \sum_{j=1}^{M} \phi_{i,j}^{\upsilon}=\mu \label{constraint_1}\\
&\hspace{2.4cm} \sum_{j=1}^{M} \bigg|\phi_{i,j}^{\upsilon}-\phi_{i,j}^{\upsilon-1}\bigg|\leq 2\xi_i^{\upsilon}\label{constraint_2}\\
&\hspace{2.4cm} \phi_{i,j}^{\upsilon}\in\{0,1\}
\end{align}
for the $\upsilon^{\tth}$ operation cycle, where $\mathbf{\Phi}^{\upsilon}$ is the caching result indicator matrix. The constraint in (\ref{constraint_1}) arises due to the maximum cache size. It is obvious that the optimal caching policy must use all caching space. In (\ref{constraint_2}), $\xi_i^{\upsilon}$ is the upper limit on the number of cache files that will be replaced in the current operation cycle. Due to requirements regarding energy efficiency and current traffic load, each user may be able to update only a few cache contents.

\section{Intensity Estimation Algorithm}\label{Sec:int_est}
In this section, we propose our intensity estimation algorithm for each $<$user,file$>$ pair. As mentioned in the previous section, the arrival intensity function $\lambda_{i,j}(t)$, which represents the arrival intensity function of the requests generated by user $U_i$ requesting file $\mathcal{F}_j$, has a period of $T_pT_0$ in the continuous time domain. Therefore, we only need to find an estimate $\hat{\lambda}_{i,j}(t)$ for $t\in [0,T_pT_0]$, and assume $\hat{\lambda}_{i,j}(t)=0$ for $t\in (-\infty,0)\bigcup(T_pT_0,+\infty)$.

\subsection{Kernel Density Estimation}
In the literature, kernel density estimator is a nonparametric estimator which can decrease the modeling biases \cite{rosenblatt1956remarks}. The density function is directly estimated from data samples observed in noise, without assuming the form of the real intensity function \cite{tanner2001statistics}. Suppose we have $N_p$ samples collected over $N_T$ periods, and we collect their relative arriving time \footnote{We set the starting time of each period as $0$, then the relative arriving time of each request is in the region $[0,T_pT_0]$.} into a set $\Xi=\{t_1,t_2,\cdots,t_{N_p}\}$. Then the kernel density estimator of the intensity function is given by \cite{berman1989estimating}
\begin{align}\label{eq:estimator}
\hat{\lambda}_{i,j}(t)=\frac{1}{N_T}\sum_{t_\alpha\in\Xi}\frac{1}{W}K\left(\frac{t-t_\alpha}{W}\right) \mathbf{1}\{0\leq t\leq T_pT_0\},
\end{align}
where $K(\cdot)$ is the kernel function, $\mathbf{1}\{\cdot\}$ is the indicator function, and $W$ represents the bandwidth of the kernel function (\emph{which is a different concept from the bandwidth of the channel}). In the literature, performances of different kernel functions have been studied \cite{wand1994kernel}. In this work, we choose the Epanechnikov function as the kernel function, which is expressed as
\begin{align}\label{eq:Ekov}
K(x)=
\begin{cases}
&0.75(1-x^2)\quad |x|\leq 1 \\
&0 \hspace{2.2cm} \text{otherwise}
\end{cases}.
\end{align}
The bandwidth $W$ in (\ref{eq:estimator}) controls the smoothness of the estimated intensity function. When $W$ is small, the kernel function is narrow, and it becomes flat as $W$ increases. In the literature, it was shown that there exists an optimal bandwidth that minimizes the integrated squared error (ISE) \cite{wand1994kernel}, and the cross-validation method was proposed in \cite{bowman1984alternative} for bandwidth selection, which is discussed in detail in Section \ref{sec:CV}. We should note that the intensity function given in (\ref{eq:estimator}) only works for the estimators defined over the interval $t \in (-\infty, +\infty)$. To have it work in our case with $t \in [0, T_pT_0]$,  end correction is needed to make up the effect near the lower and upper bounds. The corrected kernel function and the proof are provided in \ref{sec:EC} below. Specifically, in the following subsections, we first introduce the end correction and cross-validation for intensity estimation, and subsequently the intensity learning algorithm is provided in Section \ref{sec:sum_estimation}.

\subsection{End Correction}\label{sec:EC}
In this subsection, we propose a novel end correction method for our intensity estimation algorithm. In order to have accurate/unbiased estimation, the intensity estimator given in (\ref{eq:estimator}) should satisfy
\begin{align}\label{eq:int_req1}
\int_{0}^{T_pT_0}\hat{\lambda}_{i,j}(t) dt=\frac{N_p}{N_T}.
\end{align}
As the number of samples increases, this integral should convergence to the average number of requests arriving in a behavior period. For the Epanechnikov kernel function $K(\cdot)$ corresponding to a sample arriving at $t=t_\alpha$, we have
\begin{align}\label{eq:kernel_int}
\int_{-\infty}^{+\infty}\frac{1}{W}K\left(\frac{t-t_\alpha}{W}\right) dt=\int_{t_\alpha-W}^{t_\alpha+W}\frac{1}{W}K\left(\frac{t-t_\alpha}{W}\right) dt=1,
\end{align}
and $K(\frac{t-t_\alpha}{W})$ is non-zero only for $t\in [t_\alpha-W,t_\alpha+W]$. Then, for those samples with arriving time close to the ends of the interval $[0,T_pT_0]$, or more specifically when $t_\alpha<W$ or $T_pT_0-t_\alpha<W$, we have
\begin{align}
\int_{0}^{T_pT_0} \frac{1}{W}K\left(\frac{t-t_\alpha}{W}\right) dt<1,
\end{align}
which results in
\begin{align}
\int_{0}^{T_pT_0}\hat{\lambda}_{i,j}(t) dt  = \frac{1}{N_T} \int_{0}^{T_pT_0} \sum_{\alpha=1}^{N_p}\frac{1}{W}K\left(\frac{t-t_\alpha}{W}\right) dt  < \frac{N_p}{N_T}.
\end{align}

\begin{figure}
	\centering
	 \includegraphics[width=\figsize\linewidth]{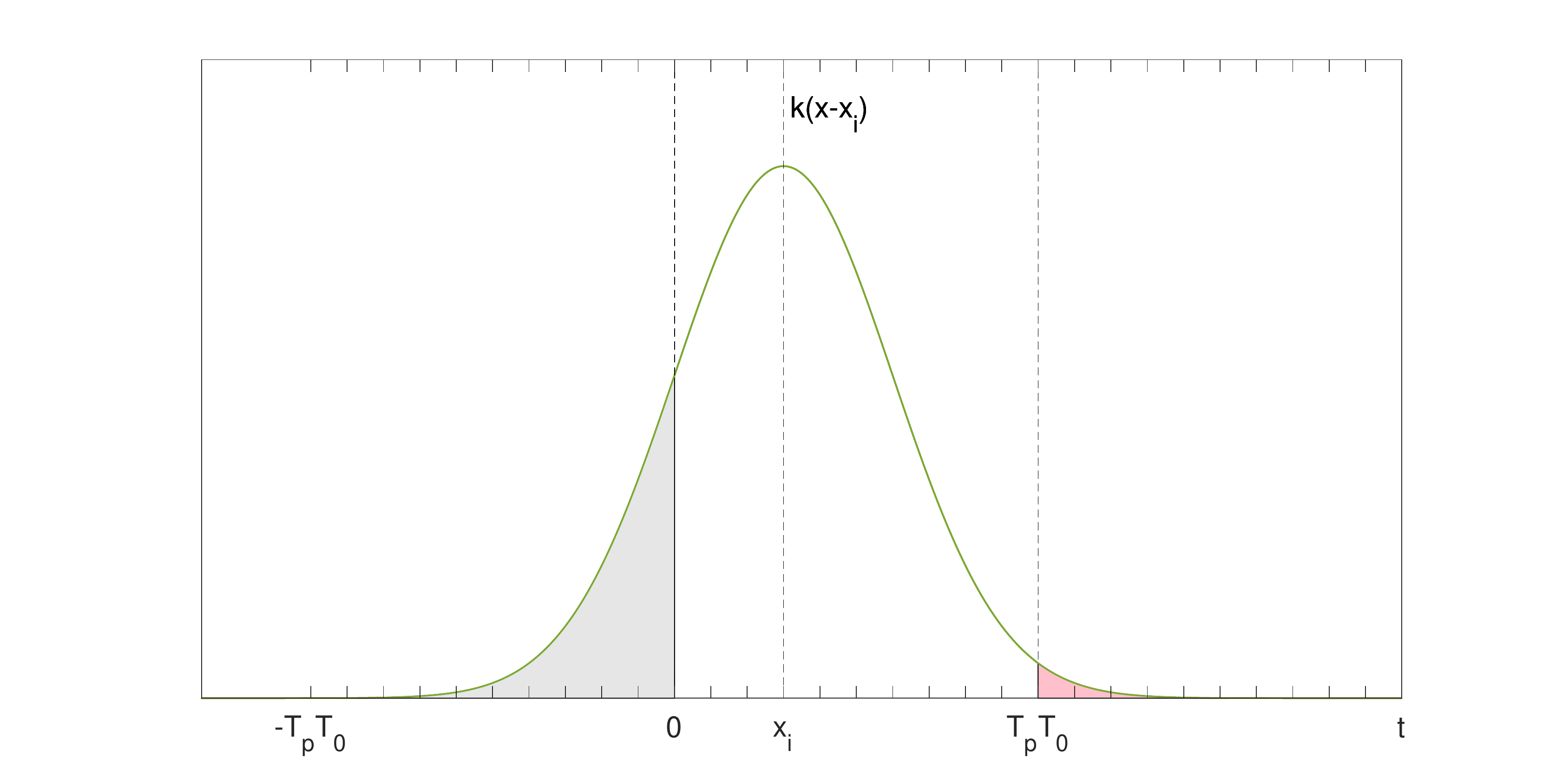}
	\caption{A kernel function exceeding beyond $t \in [0, T_pT_0]$}
	\label{fig:endbound}
\end{figure}

As shown in Fig. \ref{fig:endbound}, the kernel function can be beyond the range $t \in [0,T_pT_0]$, and the areas in gray and pink shadow are the losses or biases at left and right ends, respectively.

This bias can be fixed using end correction methods, in which a corrected kernel function $\tilde{K}(\cdot)$ is designed to satisfy
\begin{align}\label{eq:int_req2}
\int_{0}^{T_pT_0} \frac{1}{W}\tilde{K}\left(\frac{t-t_\alpha}{W}\right) dt=1,
\end{align}
which can guarantee the relationship in (\ref{eq:int_req1}) after replacing the kernel function in (\ref{eq:estimator}) with its corrected version.

In this work, we choose the corrected kernel function as
\begin{align}
\tilde{K}\left(\frac{t-t_\alpha}{W}\right)=&K\left(\frac{t-t_\alpha}{W}\right)+K\left(\frac{t-t_\alpha+T_pT_0}{W}\right)
\nonumber \\
&+K\left(\frac{t-t_\alpha-T_pT_0}{W}\right), \label{eq:kernel_correct}
\end{align}
where $K(\cdot)$ is again selected as the Epanechnikov function. The corrected kernel function incorporates two additional copies of the original kernel function and translates the copies one period to the right and leftm respectively.
Fig. \ref{fig:endcorrection} shows that the loss at left end is corrected by the copy of kernel function at the right side, and the loss at the right side is corrected by the copy at the left side.

\begin{figure}
	\centering
	 \includegraphics[width=\figsize\linewidth]{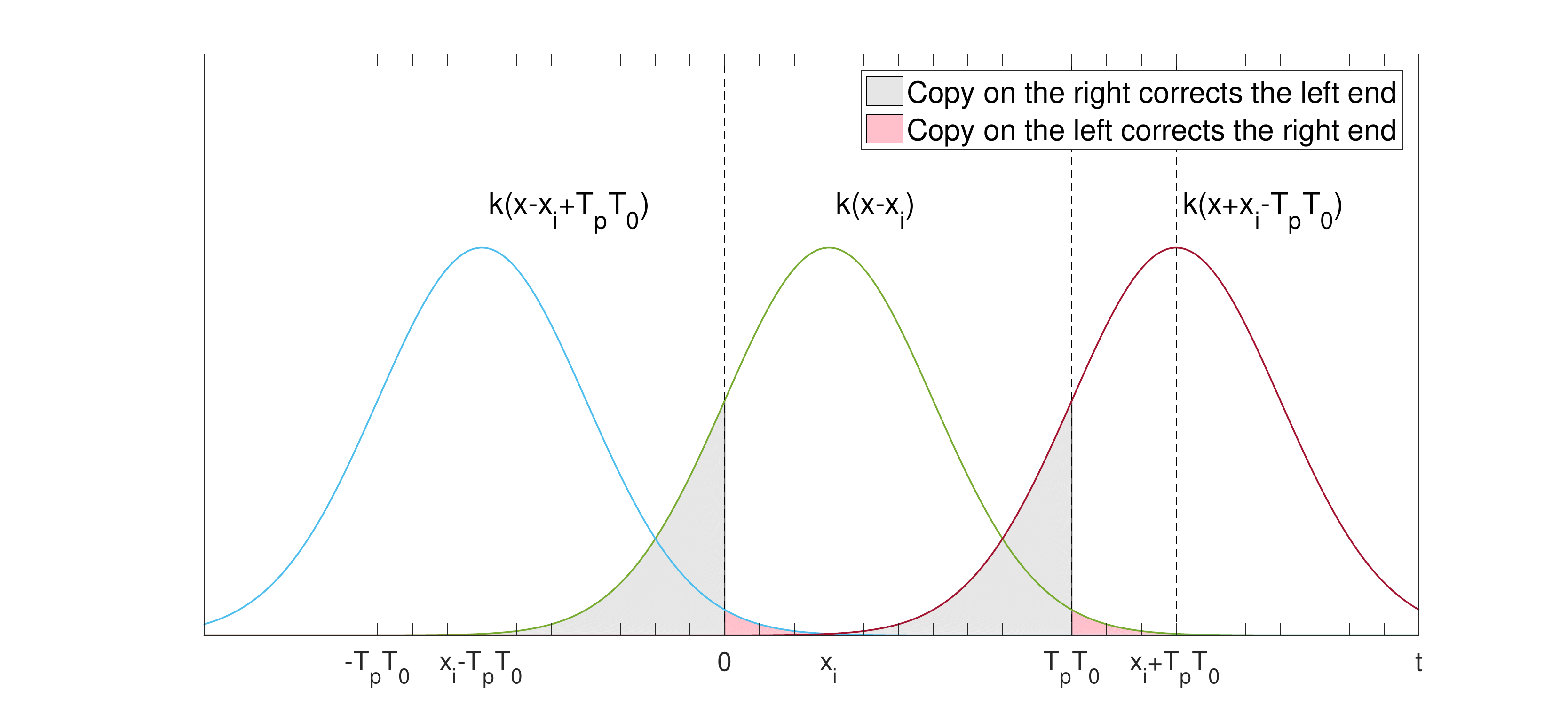}
	\caption{Corrected kernel function with end correction at both sides}
	\label{fig:endcorrection}
\end{figure}

\begin{Lem}\label{Lem1}
The corrected kernel function in (\ref{eq:kernel_correct}) satisfies the requirement described in (\ref{eq:int_req2}) when the bandwidth $W$ is smaller than $T_pT_0$.
\end{Lem}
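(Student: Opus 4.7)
The plan is to expand the integral in (\ref{eq:int_req2}) using the definition (\ref{eq:kernel_correct}), handle the two shifted copies via a change of variables, and then reassemble the three pieces into a single integral of the original Epanechnikov kernel over a range that is guaranteed to contain its whole support.

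Concretely, I would first write
\begin{align*}
\int_{0}^{T_pT_0}\frac{1}{W}\tilde{K}\!\left(\tfrac{t-t_\alpha}{W}\right)dt
&= \int_{0}^{T_pT_0}\frac{1}{W}K\!\left(\tfrac{t-t_\alpha}{W}\right)dt \\
&\quad + \int_{0}^{T_pT_0}\frac{1}{W}K\!\left(\tfrac{t-t_\alpha+T_pT_0}{W}\right)dt \\
&\quad + \int_{0}^{T_pT_0}\frac{1}{W}K\!\left(\tfrac{t-t_\alpha-T_pT_0}{W}\right)dt .
\end{align*}
In the second integral I would substitute $u = t + T_pT_0$ so the domain becomes $[T_pT_0,\,2T_pT_0]$, and in the third I would substitute $u = t - T_pT_0$ so the domain becomes $[-T_pT_0,\,0]$. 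The three domains are disjoint and abut one another, so the sum telescopes into
\[
\int_{-T_pT_0}^{2T_pT_0}\frac{1}{W}K\!\left(\tfrac{u-t_\alpha}{W}\right)du .
\]

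The remaining step is to argue that this integral equals $1$. Because the Epanechnikov kernel (\ref{eq:Ekov}) is supported on $[-1,1]$, the integrand vanishes outside $u \in [t_\alpha - W,\,t_\alpha + W]$. Since $t_\alpha \in [0, T_pT_0]$ and the hypothesis is $W < T_pT_0$, we have $t_\alpha - W > -T_pT_0$ and $t_\alpha + W < 2T_pT_0$, so the support is strictly contained in $(-T_pT_0,\,2T_pT_0)$. Extending the limits of integration to $\pm\infty$ therefore does not change the value, and by (\ref{eq:kernel_int}) the result is $1$, which is precisely (\ref{eq:int_req2}).

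The argument is essentially a bookkeeping exercise; the only genuine obstruction is the support check in the last step, which is exactly where the bandwidth hypothesis $W < T_pT_0$ is used. If $W$ were allowed to exceed $T_pT_0$, a single shift by one period would not be enough to fully cover the tails leaking past $0$ or $T_pT_0$, and one would need to add further translates of $K$, so the bound on $W$ is not an artifact of the proof but is tight for the chosen construction.
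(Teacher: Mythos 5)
Your proof is correct and follows essentially the same route as the paper's: expand the corrected kernel into its three translates, shift the two copies back by $\pm T_pT_0$ via a change of variables, stitch the domains into one interval, and invoke the normalization (\ref{eq:kernel_int}) together with $W < T_pT_0$ to guarantee the Epanechnikov support $[t_\alpha - W,\, t_\alpha + W]$ lies inside the stitched interval. The only cosmetic difference is that you substitute over the full period first and obtain the clean interval $[-T_pT_0,\,2T_pT_0]$, whereas the paper trims each shifted integral to its support (the $\min/\max$ limits) before reassembling; both arguments are sound.
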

\begin{proof}
We first insert the kernel function in  (\ref{eq:kernel_correct}) into the integral in (\ref{eq:int_req2}), and obtain (\ref{eq:proof_1}) given at the top of the next page.
\begin{figure*}
\begin{align}
& \int_{0}^{T_pT_0} \frac{1}{W}\tilde{K}\left(\frac{t-t_\alpha}{W}\right) dt \notag\\
& = \int_{0}^{T_pT_0} \frac{1}{W} \left\{K\left(\frac{t-t_\alpha}{W}\right)+K\left(\frac{t-t_\alpha+T_pT_0}{W}\right)+K\left(\frac{t-t_\alpha-T_pT_0}{W}\right)\right\} dt\\
& = \int_{0}^{T_pT_0} \frac{1}{W} K\left(\frac{t-t_\alpha}{W}\right)dt + \int_{0}^{T_pT_0} \frac{1}{W} K\left(\frac{t-t_\alpha+T_pT_0}{W}\right)dt + \int_{0}^{T_pT_0} \frac{1}{W} K\left(\frac{t-t_\alpha-T_pT_0}{W}\right)dt\label{eq:proof_1}
\end{align}
\end{figure*}
According to (\ref{eq:Ekov}), $K\left(\frac{t-t_\alpha}{W}\right)$ is non-zero only for $t\in [t_\alpha-W,t_\alpha+W]$. Therefore, when $W<T_pT_0$, we have
\begin{align}
&\int_{0}^{T_pT_0} \frac{1}{W} K\left(\frac{t-t_\alpha+T_pT_0}{W}\right)dt \notag
\\
&=\int_{0}^{\max\{0,W+t_\alpha-T_pT_0\}} \frac{1}{W} K\left(\frac{t-t_\alpha+T_pT_0}{W}\right)dt \label{eq:proof_2}\\
&=\int_{T_pT_0}^{\max\{T_pT_0,W+t_\alpha\}} \frac{1}{W} K\left(\frac{\hat{t}-t_\alpha}{W}\right)d\hat{t}. \label{eq:proof_3}
\end{align}
From (\ref{eq:proof_2}) to (\ref{eq:proof_3}), we replace $t$ with a new variable $\hat{t}=t+T_pT_0$. Similarly, we can also have
\begin{align}
&\int_{0}^{T_pT_0} \frac{1}{W} K\left(\frac{t-t_\alpha-T_pT_0}{W}\right)dt \nonumber
\\
&= \int_{\min\{T_pT_0+t_\alpha-W,T_pT_0\}}^{T_pT_0} \frac{1}{W} K\left(\frac{t-t_\alpha-T_pT_0}{W}\right)dt \label{eq:proof_5}\\
& = \int_{\min\{t_\alpha-W,0\}}^{0} \frac{1}{W} K\left(\frac{\hat{t}-t_\alpha}{W}\right)d\hat{t}. \label{eq:proof_6}
\end{align}
From (\ref{eq:proof_5}) to (\ref{eq:proof_6}), we replace $t$ with a new variable $\hat{t}=t-T_pT_0$.

Plugging (\ref{eq:proof_3}) and (\ref{eq:proof_6}) into (\ref{eq:proof_1}), we can get
\begin{align}
& \int_{0}^{T_pT_0} \frac{1}{W}\tilde{K}\left(\frac{t-t_\alpha}{W}\right) dt \notag\\
& = \int_{\min\{0,t_\alpha-W\}}^{0} \frac{1}{W} K\left(\frac{t-t_\alpha}{W}\right)dt \notag\\
& \hspace{.3cm}+ \int_{0}^{T_pT_0} \frac{1}{W} K\left(\frac{t-t_\alpha}{W}\right)dt \notag
\\
& \hspace{.3cm}+ \int_{T_pT_0}^{\max\{t_\alpha+W,T_pT_0\}} \frac{1}{W} K\left(\frac{t-t_\alpha}{W}\right)dt\\
& = \int_{\min\{0,t_\alpha-W\}}^{\max\{t_\alpha+W,T_pT_0\}} \frac{1}{W} K\left(\frac{t-t_\alpha}{W}\right)dt \\
& = \int_{t_\alpha-W}^{t_\alpha+W} \frac{1}{W} K\left(\frac{t-t_\alpha}{W}\right)dt \\
& = 1
\end{align}

and Theorem \ref{Lem1} is proved.
\end{proof}

\begin{figure}
	\centering
	 \includegraphics[width=\figsize\linewidth]{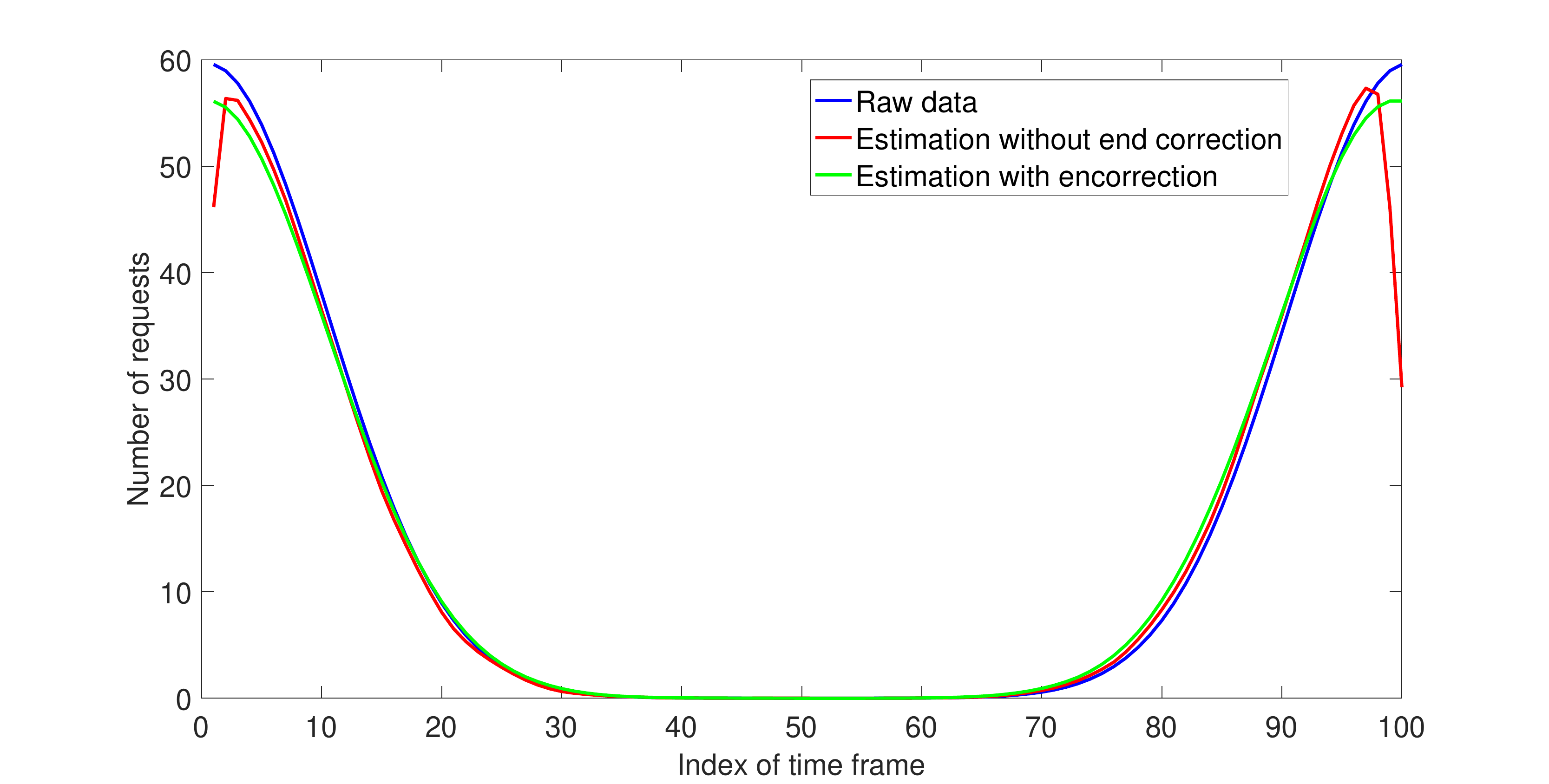}
	\caption{Estimation results with and without with end correction}
	\label{fig:endcorrectioncompare}
\end{figure}

Therefore, a unbiased intensity estimator can be obtained from combining the results in (\ref{eq:kernel_correct}) and (\ref{eq:estimator}), which can be expressed as
\begin{align}
\hat{\lambda}_{i,j}(t)&=\frac{1}{N_T}\sum_{t_\alpha\in\Xi}\frac{1}{W}\tilde{K}\left(\frac{t-t_\alpha}{W}\right) \mathbf{1}\{0\leq t\leq T_pT_0\}\\
&= \frac{1}{N_T}\sum_{t_\alpha\in\Xi}\frac{1}{W}\left\{ K\left(\frac{t-t_\alpha}{W}\right)+K\left(\frac{t-t_\alpha+T_pT_0}{W}\right)\right.\notag
\\
&\hspace{1.7cm}\left.+K\left(\frac{t-t_\alpha-T_pT_0}{W}\right) \right\}\mathbf{1}\{0\leq t\leq T_pT_0\}, \label{eq:estimator_correct}
\end{align}


for $W<T_pT_0$.

Fig.\ref{fig:endcorrectioncompare} shows the comparison between the estimation results obtained form the original kernel function and the corrected kernel function. We notice that the performance of the original kernel function drops rapidly near the bounds, while the corrected kernel function performs relatively stable near the bounds. In the following subsection, we discuss the bandwidth optimization.

\subsection{Cross-Validation}\label{sec:CV}
In this subsection, we study bandwidth optimization via the cross-validation method. This method provides an estimator of the ISE, and then form an ISE minimization problem to find the optimal bandwidth $W$. Using the analysis in \cite{bowman1984alternative}, we can obtain the ISE estimator given as
\begin{align}\label{eq:MISE}
\text{ISE}\approx \frac{N_T^2}{9N_p^2}\int_{0}^{T_pT_0} \hat{\lambda}_{i,j}^2(t) dt-\frac{2N_T}{3N_p(3N_p-1)}\sum_{\alpha=1}^{N_p} \hat{\lambda}_{i,j\;-\alpha}(t_\alpha)
\end{align}
where $\hat{\lambda}_{i,j\;-\alpha}$ is the intensity estimator when the $\alpha^{\text{th}}$ sample is removed from the sample set $\Xi$. Then the ISE minimization problem can be formulated as
\begin{align}\label{eq:P1}
&\underset{W}{\text{Minimize}}\hspace{2.5cm} \text{ISE} \\
&\text{Subject to}\hspace{2cm} 0<W<T_pT_0 \;.
\end{align}
This optimization problem can be solved easily because it is a minimization over a single scalar parameter on a bounded interval. With the optimal bandwidth, the intensity estimator $\hat{\lambda}_{i,j}$ can be determined.

\subsection{Summary of the Intensity Estimation Algorithm}\label{sec:sum_estimation}
In this subsection, we summarize the intensity estimation process. At the end of each behavior period, the intensity estimation program on the device of the $i^{\tth}$ user collects the requests arriving in the most recent $N_T$ periods, and form the sample set $\Xi$ for each content. Then, the program finds the optimal bandwidth $W$ by solving the ISE minimization problem, and determine the intensity estimators using (\ref{eq:estimator_correct}).

With the intensity estimators, each user can learn the intensity of the file requests and predict the average number of arriving requests using (\ref{eq:nreq_ij}), and then send this information to the base station at the beginning of each operation cycle. In the next section, we investigate the caching algorithm performed at the base station, which determines the contents cached at each user.

\section{Delay-Aware Caching}\label{Sec:algorithm}

In this section, we propose our caching algorithm that solves problem $\textbf{P1}$. Note that the objective in problem $\textbf{P1}$ is not convex, and the solution space is a discrete set with size $(\frac{M!}{(M-\mu)!\,\mu!})^N$. Therefore, the globally optimal solution can only be obtained via exhaustive search. In this work, we propose an efficient algorithm to determine a caching policy with delay performance close to the optimal solution.
Next, we describe the algorithm in detail.

\subsection{Caching Algorithm}
Our algorithm is a greedy algorithm, which searches over a subset of the solution space with smaller size.  At the initial point, we assume that all caches are empty, and every user has to operate in cellular mode, in which they only receive files from the base station. Then, in each step, we find the best $<$file,user$>$ pair, which provides the maximum delay improvement (or equivalently reduction in delay) if the selected file is stored in the cache of the corresponding user. We also assume that users do not physically clear their caches. They carry over their cache states, which now update only according to the final caching results, to next operation cycle. This process needs to be repeated $N\mu$ times, in order to fill all cache space, and the final caching policy is obtained.

\begin{table}
\centering
\caption{\label{Algorithm1}Algorithm \ref{Algorithm1}}
\begin{tabular}{p{8cm}}
\hline
\hline
Find the delay improvement for a $<$file,user$>$ pair\\
\hline
\hline
\textbf{Input :} user index $i$, file index $j$, caching indicator in the $\upsilon^{\tth}$ operation cycle $\phi_{i,j}^{\upsilon}$, caching indicator in the $(\upsilon-1)^{\tth}$ operation cycle $\phi_{i,j}^{\upsilon -1}$, weight matrix $\boldsymbol{\omega}^{\upsilon }=\{\omega_{i,j}^{\upsilon }\}$, source table $\bold{S}^{\upsilon }$, delay matrices $\bold{T_{\text{avg}}}^{\upsilon }$ and $\bold{D}^{\upsilon }$, the number of update vector at $\upsilon^{\tth}$ operation cycle  $ \{\xi_{1}^{\upsilon},\xi_{2}^{\upsilon},...,\xi_{N}^{\upsilon}\}$.\\
\textbf{Output :} delay improvement $g_{i,j}$, updated source table $\hat{\bold{S}}$, updated optimal delay matrix $\hat{\bold{D}}$, updated number of update vector $\{\hat{\xi_{1}}, \hat{\xi_{2}},...,\hat{\xi_{N}}\}$.\\
\hline
\textbf{Initialization :} $\hat{\bold{S}}=\bold{S}^{\upsilon }$,  $\hat{\bold{D}}=\bold{D}^{\upsilon }$ and $\hat{\xi_{i}}=\xi_{i}^{\upsilon}$\\

\textbf{If} $\phi_{i,j}^{\upsilon}=1$ or $\hat{\xi_{i}} \leq 0 $ \\
\hspace{0.5cm} $g_{i,j}=0$, end process.\\
\textbf{Else if} $\phi_{i,j}^{\upsilon -1 }=0$ and $\xi_{i}^{\upsilon } \leq 0 $\\
\hspace{0.5cm} $g_{i,j}=0$, end process.\\
\textbf{Else}\\
\hspace{0.5cm} $g_{i,j}=\omega_{i,j}^{\upsilon } D_{i,j}^{\upsilon }$ and update $\hat{S}_{i,j}\leftarrow U_i$, $\hat{D}_{i,j}=0$, $\hat{\xi_{i}}=\xi_{i}^{\upsilon} -1$.\\
\hspace{0.5cm} \textbf{For} $k=1:N$\\
\hspace{1cm} \textbf{If} $D_{k,j}^{\upsilon }>T_{i,k}^{\upsilon }$ and $i\neq k$\\
\hspace{1.5cm} $g_{i,j}=g_{i,j}+\omega_{k,j}^{\upsilon } (D_{k,j}^{\upsilon }-T_{i,k}^{\upsilon })$ \\
\hspace{1.5cm} update $\hat{D}_{k,j}=T_{i,k}^{\upsilon }$ and $\hat{S}_{k,j}\leftarrow U_i$\\
\hspace{1cm} \textbf{End}\\
\hspace{0.5cm} \textbf{End}\\
\textbf{End}\\

\hline\hline
\end{tabular}
\end{table}

In Table \ref{Algorithm1}, we describe Algorithm \ref{Algorithm1} in detail, which calculates the delay improvement and determines the updated $\hat{\bold{S}}$ and $\hat{\bold{D}}$ matrices accordingly when we cache file $\mathcal{F}_j$ at user $U_i$. First, we check if $\mathcal{F}_j$ has already been cached at $U_i$. If file $\mathcal{F}_j$ is cached at user $U_i$ in the current operation cycle, we end the process, and return the delay improvement $g_{i,j}=0$; if not, we continue to check if the file $\mathcal{F}_j$ was cached at user $U_i$ in the previous operation cycle and if the number of updates of user $U_i$ in the operation cycle is within the upper bound $\xi_i^{\upsilon}$. If file $\mathcal{F}_j$ was not cached at user $U_i$ in the previous operation cycle and the limit on the number of updates does not allow user $U_i$ to update in the current operation cycle,  we end the process, and return the delay improvement $g_{i,j}=0$; if not, we set $g_{i,j}=\omega_{i,j}^{\upsilon } D_{i,j}^{\upsilon }$ because that is the reduction in $\eta$ at user $U_i$ if it adds $\mathcal{F}_j$ to its cache. And at the same time, we update the available number of updates at user $U_i$ in the current operation cycle. Then, we need to sum up all reductions at each user. At user $U_k$, if $D_{k,j}^{\upsilon }>T_{i,k}^{\upsilon }$, then D2D link $U_i-U_k$ has the lowest average delay for $U_k$ to receive $\mathcal{F}_j$ and the reduction at $U_k$ is $\omega_{k,j}^{\upsilon } (D_{k,j}^{\upsilon }-T_{i,k}^{\upsilon })$; if not, then caching $\mathcal{F}_j$ at $U_i$ does not help to improve the delay performance at $U_k$.

\begin{table}
\centering
\caption{\label{Algorithm2}Algorithm \ref{Algorithm2}}
\begin{tabular}{p{8cm}}
\hline
\hline
Find the optimal $<$file,user$>$ pair to be added in the updated caching result, leading to maximum delay improvement\\
\hline
\hline
\textbf{Input :} weight matrix $\boldsymbol{\omega}^{\upsilon }=\{\omega^{\upsilon }_{i,j}\}$, caching indicator matrix in the $\upsilon^{\tth}$ operation cycle  $\mathbf{\Phi}^{\upsilon}$, caching indicator matrix in the $(\upsilon -1)^{\tth}$ operation cycle  $\mathbf{\Phi}^{\upsilon -1}$, source table $\bold{S}^{\upsilon }$, delay matrices $\bold{T_{\text{avg}}}^{\upsilon }$ and $\bold{D}^{\upsilon }$, index of time frame $\kappa$, the number of update vector at the $\upsilon^{\tth}$ operation cycle $ \{\xi_{1}^{\upsilon},\xi_{2}^{\upsilon},...,\xi_{N}^{\upsilon}\}$.\\
\textbf{Output :} new source table $\bold{S}^{\upsilon}$, new optimal delay matrix $\bold{D}^{\upsilon}$, new caching indicator matrix $\mathbf{\Phi}^{\upsilon}$.\\
\hline
\textbf{Initialization :} set optimal delay improvement $g^*=0$, and set the corresponding $\bold{S}^*=\bold{S}^{\upsilon }$, $\bold{D}^*=\bold{D}^{\upsilon }$, the updated number of update vector $\{\xi_{1}^*, \xi_{2}^*,...,\xi_{N}^*\} = \{\xi_{1}^{\upsilon},\xi_{2}^{\upsilon},...,\xi_{N}^{\upsilon}\}$.\\
\\
\textbf{For} $i=1:N$\\

\hspace{0.5cm} \textbf{For} $j=1:M$\\
\hspace{1cm} run Algorithm \ref{Algorithm1} for $<U_i,\mathcal{F}_j>$, to obtain\\
\hspace{1cm} the gain $g_{i,j}$ and the corresponding $\hat{\bold{S}}$ and $\hat{\bold{D}}$.\\
\hspace{1cm} \textbf{If} $g_{i,j}>g^*$\\
\hspace{1.5cm} update $g^*=g_{i,j}$, $\{\xi_{1}^*, S^* = \hat{S}, D* = \hat{D}$,\\
\hspace{1.5cm} $\widetilde{i}=i$, and $\widetilde{j}=j$.\\
\hspace{1cm} \textbf{End}\\
\hspace{0.5cm} \textbf{End}\\
\textbf{End}\\

$\phi_{\widetilde{i},\widetilde{j}}^{\upsilon}=1$, $\xi_{\widetilde{i}}^* = \xi_{\widetilde{i}}^{\upsilon} -1$,\\

update   $\mathbf{\Phi}^{\upsilon}$, $\bold{S}^{\upsilon}=\bold{S}^*$, $\bold{D}^{\upsilon}=\bold{D}^*$, and $\{\xi_{1}^{\upsilon},\xi_{2}^{\upsilon},...,\xi_{N}^{\upsilon}\}=\{\xi_{1}^*, \xi_{2}^*,...,\xi_{N}^*\}$.
\\
\hline\hline
\end{tabular}
\end{table}

Based on Algorithm \ref{Algorithm1}, Algorithm \ref{Algorithm2} described in Table \ref{Algorithm2} helps to find the optimal $<$file,user$>$ pair to be added to the updated caching result, which leads to the maximum delay reduction. In Algorithm \ref{Algorithm2}, $\widetilde{i}$ and $\widetilde{j}$ record the optimal user index and file index, respectively. $g^*$ tracks the maximum delay improvement, $\{\xi_{1}^*, \xi_{2}^*,...,\xi_{N}^*\}$ tracks the number of updates at all users, and $\bold{S}^*$ and $\bold{D}^*$ record the new source table and minimum delay matrix, respectively, after caching $\mathcal{F}_{\widetilde{j}}$ at $U_{\widetilde{i}}$. We search over all $NM$ possible $<$file,user$>$ combinations, find their delay improvements and update $g^*$, $\widetilde{i}$, $\widetilde{j}$, $\bold{S}^*$ and $\bold{D}^*$ accordingly. At user $U_i$, we check if there is available space in its cache. If its cache is full, we directly jump to the next user $U_{i+1}$. For each $<$file,user$>$ pair, we run Algorithm \ref{Algorithm1} to calculate the corresponding delay improvement, and compare it with $g^*$. If a $<$file,user$>$ pair exceeds the maximum delay improvement up to that point, we perform the update accordingly. Every time we run Algorithm \ref{Algorithm2}, we cache one more file at a user. Therefore, we need to run Algorithm \ref{Algorithm2} $N\mu$ times to obtain the final caching result, and this process is described in Algorithm \ref{Algorithm3} in Table \ref{Algorithm3}.

\begin{table}
\centering
\caption{\label{Algorithm3}Algorithm \ref{Algorithm3}}
\begin{tabular}{p{8cm}}
\hline
\hline
Caching Algorithm for the $\upsilon^{\tth}$ operation cycle\\
\hline
\hline
\textbf{Input :}  weight matrix $\boldsymbol{\omega}^{\upsilon }=\{\omega_{i,j}^{\upsilon }\}$, and delay matrix $\bold{T_{\text{avg}}}^{\upsilon }$,caching indicator matrix in the $\upsilon^{\tth}$ operation cycle  $\mathbf{\Phi}^{\upsilon}$, caching indicator matrix in the $(\upsilon -1)^{\tth}$ operation cycle  $\mathbf{\Phi}^{\upsilon -1}$, number of update vector $ \{\xi_{1}^{\upsilon},\xi_{2}^{\upsilon},...,\xi_{N}^{\upsilon}\}$.\\
\textbf{Output :} updated caching indicator matrix $\mathbf{\Phi}^{\upsilon}$, source table $\bold{S}^{\upsilon}$.\\
\hline
\textbf{Initialization :} for all requests, $S_{i,j}^{\upsilon }\leftarrow \BS$, $D_{i,j}^{\upsilon }=T_{i,i}$. Set all $\phi_{i,j}^{\upsilon}=0$.\\
\textbf{For} $loop=1:N\mu$\\
\hspace{0.5cm} run Algorithm \ref{Algorithm2} to cache a file and update the result.\\
\textbf{End}\\
\hline\hline
\end{tabular}
\end{table}

For our proposed caching algorithm, we initially have all caches empty, and all users work in cellular mode, in which they only receive files from the base station at first. We assume that the system has calculated the average delay between each pair of nodes, and stored the delay matrix $\bold{T_{\text{avg}}}^{\upsilon }$ at the base station. Then, the base station runs Algorithm \ref{Algorithm2} $N\mu$ times, and in each time we cache one more file and update the caching indicator $\mathbf{\Phi}^{\upsilon}$, source table $\bold{S}^{\upsilon}$, and minimum delay matrix $\bold{D}^{\upsilon}$ accordingly. Finally, the base station sends the caching files to the users when the traffic load is low.

The solution of \textbf{P1} is described below:
\begin{enumerate}
	\item At the beginning of the $\upsilon^{\text{th}}$ cycle, the system estimates the delay matrix $\bold{T_{\text{avg}}}^{\upsilon-1}$, and weight vector $\boldsymbol{\omega}^{\upsilon-1}$ according to the request intensity estimation in the previous cycle. The base station receives the transmission powers $P_i^\upsilon$ from the users, determines the cycle period $\tau^\upsilon$ and the upper bound $\xi_i^{\upsilon}$, and then predicts $\bold{T_{\text{avg}}}^{\upsilon}$, and $\boldsymbol{\omega}^{\upsilon}$.
	\item Algorithm \ref{Algorithm2} is repeated $N\mu$ times to determine the caching result in the $\upsilon^{\text{th}}$ cycle.
\end{enumerate}
After this process, the base station sends the cache contents to each user, and conducts regular transmissions after updating the cache files at each user.

As we have mentioned in Section \ref{Sec:formulation}, this proposed algorithm does not require the first $4$ resource allocation assumptions described in Section \ref{sec:model}, and works for any resource allocation algorithm, since the delay matrices $\bold{T_{\text{avg}}}$ and $\bold{D}$ can be evaluated via estimation or learning methods. 

Finally, we note that key notations used throughout the analysis heretofore are listed in Table \ref{tabel:notation} for ease in reference in the paper.

\begin{table}[htbp]
	\caption{Notations}
	\label{tabel:notation}
	\begin{center}
		\resizebox{\columnwidth}{!}{
		\begin{tabular}{|l|l|}
			\hline
			Notation & Description\\ \hline
			$M$ & Number of files \\ \hline
			$\mathcal{F}_i$ & File name \\ \hline
			$N$ & Number of users \\ \hline
			$U_{i}$ & User name \\ \hline
			$F$ & Size of files \\ \hline
			$\mu$ & Cache size \\ \hline
			$\mathbf{\Phi}$ & Cache state \\ \hline
			$\phi_{i,j}$ & Cache state indicator \\ \hline
			$T_0$ & Duration of each time frame \\ \hline
			$\kappa$ & Index of time frame \\ \hline
			$t_{\kappa}$ & Time instant when $\kappa^{\tth}$ time frame begins \\ \hline
			$T_p$ & Users' behavior period \\ \hline
			$\lambda_{i,j}(t)$ & Intensity function of requests generated by $U_i$ for\\&  $\mathcal{F}_j$ \\ \hline
			$\upsilon$ & Index of operation cycle \\ \hline
			$\tau_{\upsilon}$ & Time instant in the $\upsilon^{\tth}$ time frame \\ \hline
			$\E\left\{\Nreq^{i,j}[\upsilon]\right\}$ & Average number of requests \\ \hline
			$t_{\upsilon}$ & Time instant when $\upsilon^{\tth}$ operation cycle begins \\ \hline
			$N_c$ & Number of channels in OFDMA system \\ \hline
			$B$ & Bandwidth of each channel \\ \hline
			$\E\{T_{i,j}\}$ & Average transmission delay of link $U_i-U_j$ \\ \hline
			$\E\{T_{i,i}\}$ & Average transmission delay between $U_i$ and\\& base station \\ \hline
			$\bold{T_{\text{avg}}}$ & $N\times N$ average transmission delay\\& matrix \\ \hline
			$S$ & Best source table \\ \hline
			$D$ & Smallest average transmission delay to deliver $\mathcal{F}_j$\\& to $U_i$ \\ \hline
			$\eta[\upsilon]$ & average transmission delay of the requests\\& generated in the $\upsilon^{\tth}$ operation cycle\\ \hline
			$\omega_{i,j}^{\upsilon}$ & Weight of request generated by $U_i$ requesting\\& $\mathcal{F}_j$ \\ \hline
			$N_T$ & Number of periods \\ \hline
			$N_p$ & Number of samples collected over $N_T$ periods \\ \hline
			
		\end{tabular}
	}
	\end{center}
\end{table}

\subsection{Complexity analysis of the proposed caching algorithm}\label{complexity}

	The proposed caching algorithm is able to reduce the search space and obtain a caching policy with delay performance close to the optimal solution. According to the proposed caching algorithm, for each operation cycle, the base station runs Algorithm 2 for $N\mu$ times to find $N\mu$ $<User, File>$ pairs to update all users' cache space. At the $l^{\text{th}}$ time the base station runs Algorithm 2, it searches over $NM-(l-1)$ possible $<$file,user$>$ pairs, where the term $l-1$ corresponds to the $l-1$ $<$file,user$>$ pairs that have been selected in previous iterations. Therefore, the size of the search space of our algorithm is $\sum_{l=1}^{N\mu} NM-(l-1)=N^2M\mu-\frac{1}{2}N^2\mu^2+\frac{1}{2}N\mu$, which is much smaller than the size of the entire solution space $(\frac{M!}{(M-\mu)!\,\mu!})^N$.

\section{Broadcasting in the Transmission Phase} \label{Sec:broadcast}
In the analysis of the transmission delay in Section \ref{Sec:delay}, we have considered point-to-point links and assumed that each  transmitter can only transmit to one receiver at a time. While the proposed caching algorithm is applicable for any scheme as long as average delays can be estimated, our numerical results in Section \ref{Sec:numerical} mainly considers such point-to-point transmissions.
In this section, we extend our analysis and study how the transmission delay is influenced when broadcasting is allowed. Analysis is conducted with the following assumptions:
\begin{itemize}
	\item At each time instant, each user will generate one request for a file based on its own preference for files.
	\item If a file is only requested by one user, the algorithm will first check if the file is cached at any other user. If yes, the link with smallest delay will be selected to transmit; otherwise, the base station will transmit the file to the corresponding user. And the delay caused by this file is the average transmission delay between the source and the receiver.
	\item When a file is requested by more than one user, then the base station or a user with the cached file will broadcast to all other users who do not have the file but have requested it. In the selection of the source for broadcasting, maximization of the minimum of the rates in the links used for broadcasting will be considered as detailed below.
	\item In a time slot, D2D transmitters are not allowed to broadcast more than one file simultaneously.
\end{itemize}

Now, we introduce how the transmission delay is determined when a file is broadcast to multiple receivers. In every time slot, after the caching decisions are made, the transmission phase starts when the actual requests for files are generated by the users. Assume that in the $\kappa^{\text{th}}$ time frame, the channel capacity for a transmission link between the transmitter $U_{t}$ (which can include the base station) and the receiver $U_{r}$  is
\begin{align}
C_{t, r}[\kappa]= B\log_2\left(1+\frac{P_t}{B\sigma^2}z_{t,r}[\kappa] \right).
\end{align}

Then assume that for a file $\mathcal{F}$, there are $N_t$ users that can broadcast it to $N_r$ users. If transmitter $U_T$ where $T = 1, 2, ..., N_t$ is selected to broadcast the file, the transmission rate should be the minimum of the rates in the links between $U_T$ and all receivers so that all receivers can reliably get the file. Hence, we have
\begin{align}
{C_{T}}[\kappa ] =&& \min \left\{ {B{{\log }_2}\left( {1 + \frac{{{P_t}}}{{B{\sigma ^2}}}{z_{T,1}}[\kappa ]} \right),} \right.  \nonumber\\
&& \left. {\ldots,B{{\log }_2}\left( {1 + \frac{{{P_t}}}{{B{\sigma ^2}}}{z_{T,{N_r}}}[\kappa ]} \right)} \right\}.
\end{align}

Now, in order to achieve the smallest delay, we select, among the $N_t$ source candidates, the user with the largest $C_T[\kappa]$. Hence, the broadcast rate is
\begin{align}
{\boldmath{C_j}}[\kappa]= \max  \{  {{C_{1}}[\kappa ], {C_{2}}[\kappa ], \ldots, {C_{N_t}}[\kappa ]} \}. \label{eq:broadcastrate}
\end{align}
Note that we essentially need to solve a max-min problem in determining the source that will broadcast the file.

Having characterized the broadcast rates, we can now follow the same approach described in Section \ref{Sec:delay} to determine the average delays in each link. For instance, the duration to send a file can be determined via the formulation in (\ref{eq:transmissiondelay}) using the rate in (\ref{eq:broadcastrate}), and average of these durations can be determined through numerical and simulation results.  An important distinction is that in the computation of the overall average system delay, only a single delay term needs to be used when a file is broadcast to multiple users (instead of summing up the delays to transmissions to different users because the file is sent simultaneously to multiple users in the broadcast approach).

%

\section{Numerical Results}\label{Sec:numerical}


In this section, we investigate the performance of the proposed intensity estimation algorithm and the caching algorithm via numerical results. In Section \ref{EstResult}, we discuss the performance of intensity estimation in terms of the estimation error. In Section \ref{DelayResult}, we compare the performance of the proposed caching algorithm with those of the following algorithms:

\begin{itemize}
	\item Naive algorithm: In this algorithm, each user just caches the most popular $\mu$ files. In our implementation, the naive algorithm learns the popularity of files from the users' arrival intensity function for each file. Naive algorithm can be an effective approach when the base station does not have the knowledge of the channel fading statistics and the cached files at each user. With this algorithm, the users just cache files according to their own preferences. The comparisons are provided in three parts. In Section \ref{ct}, we show the performance differences between our proposed algorithm and naive algorithm. Then, the average delay $\eta$ is considered for both caching algorithms in two scenarios when the users' preferences are unknown and known to the system in Section \ref{prfct} and Section \ref{pplrty}, respectively.
	
	\item Probabilistic algorithm: Probabilistic caching algorithms have recently been addressed for instance in  \cite{chen2017optimal} and \cite{chen2017probabilistic}.
Different from the idea that each user caches the most popular files based on its own preference, the probabilistic algorithm introduces randomness into caching decisions, which potentially increases the chances that the users collaborate with each other to reduce the overall transmission delay via D2D transmissions. Specifically, in probabilistic caching, a file is cached with a probability that is proportional to its popularity. Hence, highly popular files have a higher chance to be cached while less popular files can still be cached but with relatively low probabilities. We compare the proposed algorithm with the probabilistic algorithm in the Section \ref{pplrty}, where the popularity matrix of the files is known to the system and is described by the Zipf distribution.

	
\end{itemize}

\subsection{Performance of the Intensity Estimation Algorithm}\label{EstResult}
Intuitively, the performance of intensity estimation has positive correlation with the number of observation samples, i.e., the more samples we have, the smaller error of intensity estimation will be. In our case, for the intensity function given in (\ref{eq:int_req1}), the number of samples is decided by the intensity function and the number of periods $N_T$. However, the number of requests we collect is dependent on the users' demand, which can be very large or small. And we cannot afford to arbitrarily increase the number of periods over which we collect samples because of potential high cost. So, in this part we study how the error in intensity estimation varies with the minimum intensity, which is defined as the average number of requests for the least popular file, and also as $N_T$ changes.

In Figure \ref{fig:est_1}, we set the number of users as $N=25$, the number of files in the library as $M = 100$, and the cache size as $\mu = 30$. And we perform the experiments with the number of period to sampling set as $N_T=\{1, 10, 100\}$ respectively, and plot the intensity estimation error as a function of the minimum intensity. As the minimum intensity increases, the error in intensity estimation decreases because the minimum intensity controls the lower bound on the requests for each file. For different values of $N_T$, we note that all error curves tend to converge at a point when the minimum intensity is sufficiently large, i.e., when the collected samples are sufficient for estimation. And the error converges faster when we have a larger $N_T$.


\begin{figure}
\centering
\includegraphics[width=\figsize\linewidth]{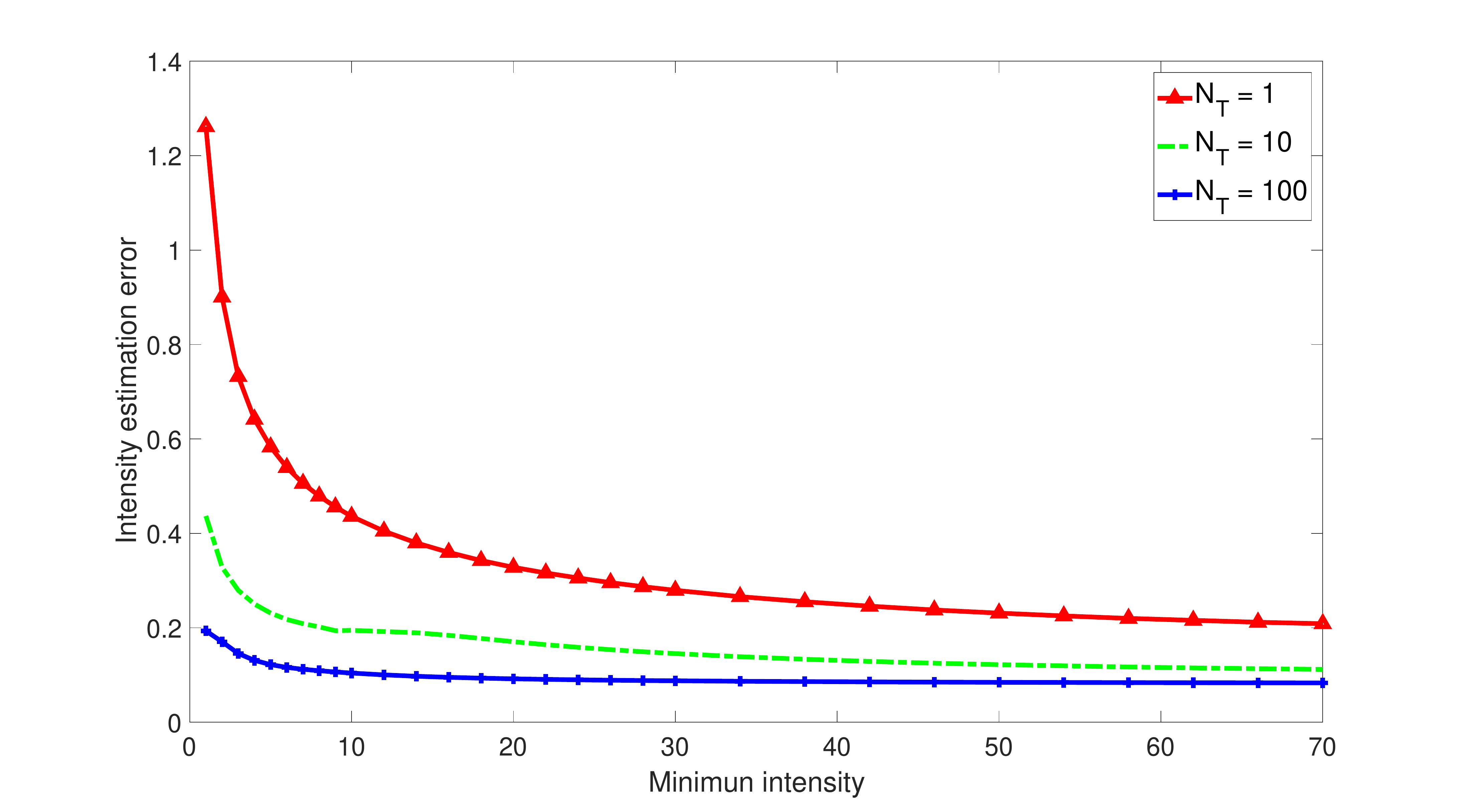}
\caption{Intensity estimation error vs. minimum intensity for $N_T=1,10,100$. The number of files in the library is $M = 100$, and the cache size is $\mu = 30$.}\label{fig:est_1}
\label{fig:est_1}
\end{figure}

In Figure \ref{fig:est_2}, we set the number of users as $N=7$, the number of files in the library as $M = 15$, and the cache size as $\mu = 4$, and we perform the experiments with the value of minimum intensity set as $\{40, 100, 400\}$ respectively,  and plot the intensity estimation error as a function of $N_T$. As $N_T$ increases, the error in intensity estimation decreases. When $N_T$ exceeds a certain threshold, the error in intensity estimation converges and does not diminish further because $N_T$ is already large enough to provide a sufficient number of samples to the system. For different values of the minimum intensity, when the three curves are compared, we observe that the threshold for convergence is smaller when we have a larger value for the minimum intensity.

\begin{figure}
	\centering
	 \includegraphics[width=\figsize\linewidth]{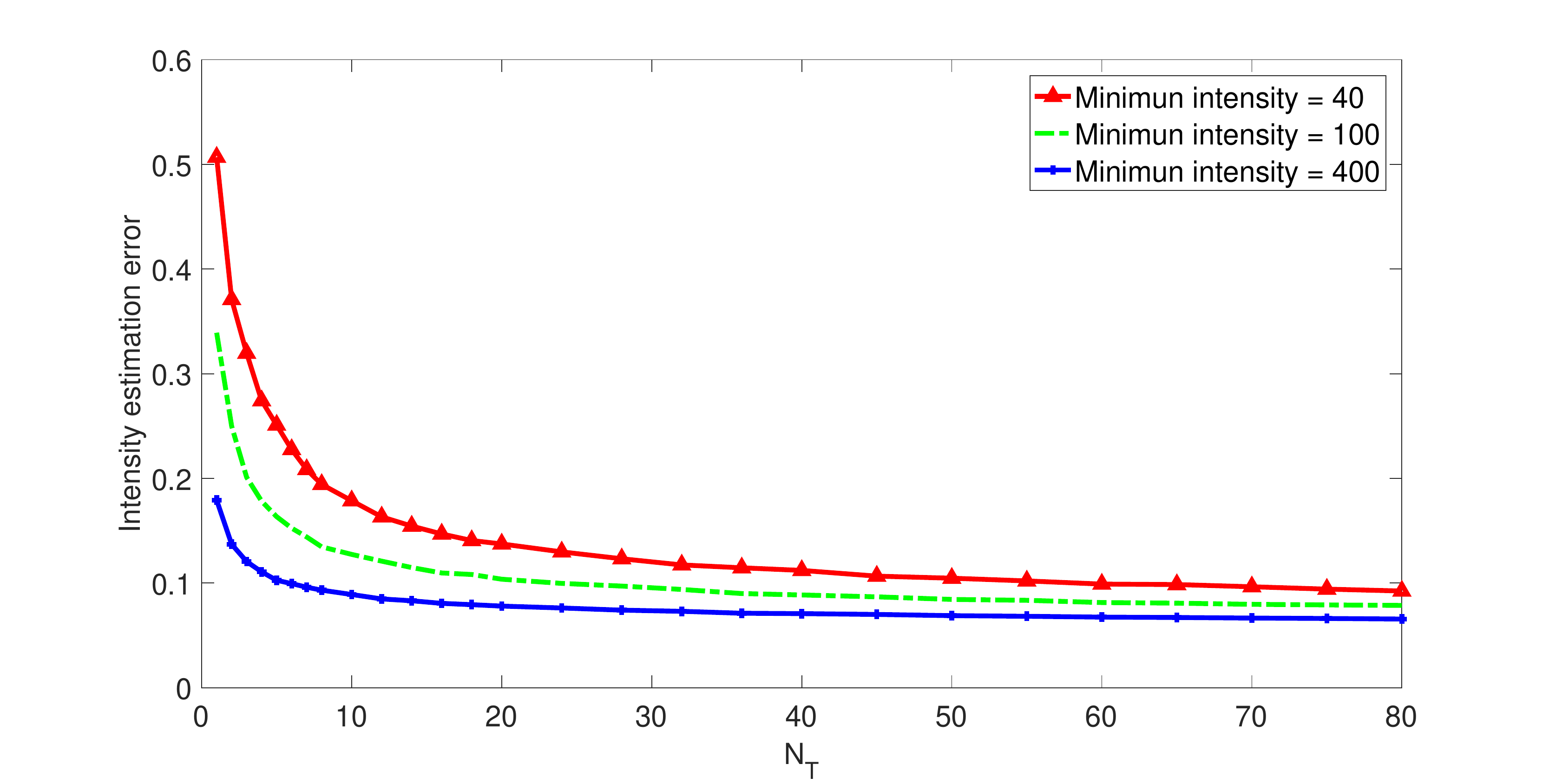}
	\caption{Intensity estimation error vs. $N_{T}$ for minimum intensity = $40,100,400$. The number of files in the library is $M = 15$, and the cache size is $\mu = 4$}\label{fig:est_2}
	\label{fig:tnumerror}
\end{figure}


\subsection{Performance of the Caching Algorithm}\label{DelayResult}

Now, we first show the difference between the proposed caching algorithm and naive algorithm in Section \ref{ct}, and then we compare the performance of the proposed algorithm with other algorithms in terms of average delay in Sections \ref{prfct}, \ref{bc} and \ref{pplrty}.

As shown in Figure \ref{fig:system}, the users are located within a circular cell with the base station placed at the center. In Section \ref{ct}, we design a scenario where the distances between users are less than the distances between the users and base station, so that we can force the users to select D2D transmission rather than the cellular transmission. Also, the users' preference for files are all designed for the purpose to track the cache state of each user and see how the proposed algorithm works differently when compared to the naive algorithm.  In Sections \ref{prfct}, \ref{bc} and \ref{pplrty}, the location and the preference for files of each user are randomly generated. In our experiments, we assume Rayleigh fading with path loss $\E\{z\}=d^{-4}$, where $d$ represents the distance between the transmitter and the receiver, and $\eta$ denotes the average system delay.


\subsubsection{Performance comparison between the proposed caching algorithm and naive algorithm}\label{ct}

In our implementation in this part, Section \ref{prfct} and \ref{bc}, we set the cell radius as $R = 1.5 km$, base station transmission power as $P_b = 16.9 dB$, users' transmission power as $P_u = 13 dB$, and package size of each file as $F = 96.13 bits$.

To observe the difference between the proposed caching algorithm and naive algorithm, we design a experiment with $N = 3$, $M = 21$, and $\mu = 3$. Then, we set the preferences of the three users (or more explicitly the most popular nine files for each user $U_1$, $U_2$, and $U_3$ in decreasing popularity) as $\{\mathcal{F}_1,\mathcal{F}_2,\mathcal{F}_3,\mathcal{F}_4,\mathcal{F}_5,\mathcal{F}_6,\mathcal{F}_7,\mathcal{F}_8,\mathcal{F}_9\}$, $\{\mathcal{F}_8,\mathcal{F}_9,\mathcal{F}_{10},\mathcal{F}_{11},\mathcal{F}_{12},\mathcal{F}_{13},\mathcal{F}_{14},\mathcal{F}_1,\mathcal{F}_2\}$ and
\\
$\{\mathcal{F}_{15},\mathcal{F}_{16},\mathcal{F}_{17},\mathcal{F}_{18},\mathcal{F}_{19},\mathcal{F}_{20},\mathcal{F}_{21},\mathcal{F}_1,\mathcal{F}_2\}$, respectively. To make sure that the users have the chance to work collaboratively, we have the users to become active alternately as follows: $U_1$ and $U_3$ are most active  \footnote{The most active means that the number of the requests by the user will reach its peak in the current period of the intensity function.} in the $25^{\tth}$ operation cycle and $U_2$ is most active in $75^{\tth}$ operation cycle.

Table \ref{tabe:cache state} shows the cache states of the three users in the $1^{\tth}$, $25^{\tth}$, and $75^{\tth} $ operation cycles when the proposed algorithm is employed. We notice that in the $1^{\tth}$ operation cycle, all users cache their own favorite files. In the $25^{\tth}$ operation cycle, under the naive algorithm, all users will still cache files based their own preferences. But under the proposed algorithm, $U_1$ and $U_3$ cache their own favorite files, and since $U_2$ is not active in this operation cycle, it caches $\mathcal{F}_{4}$ and $\mathcal{F}_{5}$ for $U_1$, and $\mathcal{F}_{18}$ for $U_3$. In the $75^{\tth}$ operation cycle, under the naive algorithm, users will keep caching their own favorite files while under the proposed algorithm, $U_2$ caches its own favorite files, and since $U_1$ and $U_3$ are not active at this time, they both help to cache $U_2$'s remaining favorite files. This example demonstrates that with the proposed caching algorithm, users collaborate to reduce the overall average delay, but with the naive algorithm, each user attempts to only reduce its own average delay.\\

%


\begin{table}
	\centering
	\caption{Cache states}
	\label{tabe:cache state}
\small
\begin{tabular}{|c|c|c|c|}
\hline
	index of operation cycle & user $1$ & user 2& user 3\\ \hline
	1 & $\{1,2,3\}$ & $\{8,9,10\}$ & $\{15,16,17\}$ \\ \hline
	25 & $\{1,2,3\}$ & $\{4,18,5\}$ & $\{15,16,17\}$ \\ \hline
	75 & $\{11,12,13\}$ & $\{8,9,10\}$ & $\{14,1,2\}$ \\
\hline
\end{tabular}
\end{table}

\subsubsection{Average delay based on intensity estimation}\label{prfct}

Now, we compare the proposed caching algorithm with naive algorithm based on the real intensity and estimated intensity obtained by the proposed non-parametric estimator. In the experiment, we describe the real intensity as the perfect intensity function, and the estimated intensity as the imperfect intensity function. In this part, the total number of files is set as $M = 100$.

In Figure \ref{fig:delay_mu}, we set $N=25$ and plot the overall average delay as a function of the cache size $\mu$. When $\mu$ is small, the proposed algorithm does not lead to a significant advantage over the naive algorithm. As $\mu$ increases, the average delay $\eta$ of both algorithms decrease, and the gap between the proposed algorithm and naive algorithm grows because the users collaborate in the proposed algorithm while in the naive algorithm, users only consider reducing their own average delay. As $\mu$ increases further and exceeds a threshold, the gap between the two algorithms decreases, because the system starts having sufficient cache space for all popular files. And the curves with perfect intensity function and the imperfect intensity function for both algorithms are almost overlapping, which demonstrates the superior performance of the non-parametric estimator.

\begin{figure}
	\centering
	 \includegraphics[width=\figsize\linewidth]{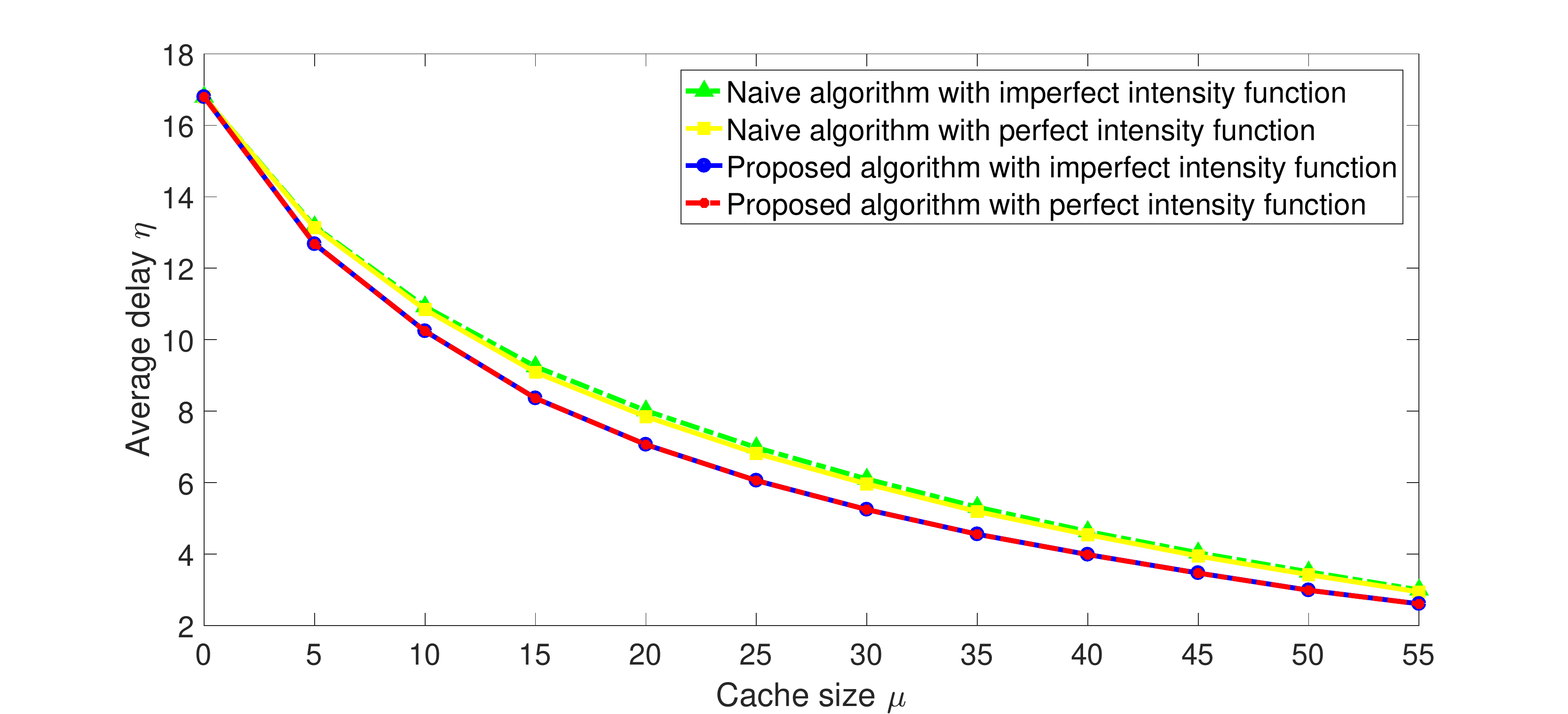}
	\caption{Average delay $\eta$ vs. cache size $\mu$}
	\label{fig:delay_mu}
\end{figure}

In Figure \ref{fig:delay_N}, we set $\mu=25$ and plot the overall average delay as a function of the number of users $N$. When $N = 1$, the only user cannot select the D2D mode, so the average delay $\eta$ is the transmission delay from base station to the user. As $N$ increases, the average delay $\eta$ tends to decrease, and the gap between the proposed algorithm and naive algorithm increases rapidly because having more users means there will be more collaboration based on other users' preferences in the proposed algorithm which can provide the optimal $<$file,user$>$ pairings and lead to the maximum delay improvement. And comparing the performance gap between the cases with the perfect intensity function and imperfect intensity function, we find that the proposed caching algorithm has better tolerance to estimation errors.


\begin{figure}
	\centering
	 \includegraphics[width=\figsize\linewidth]{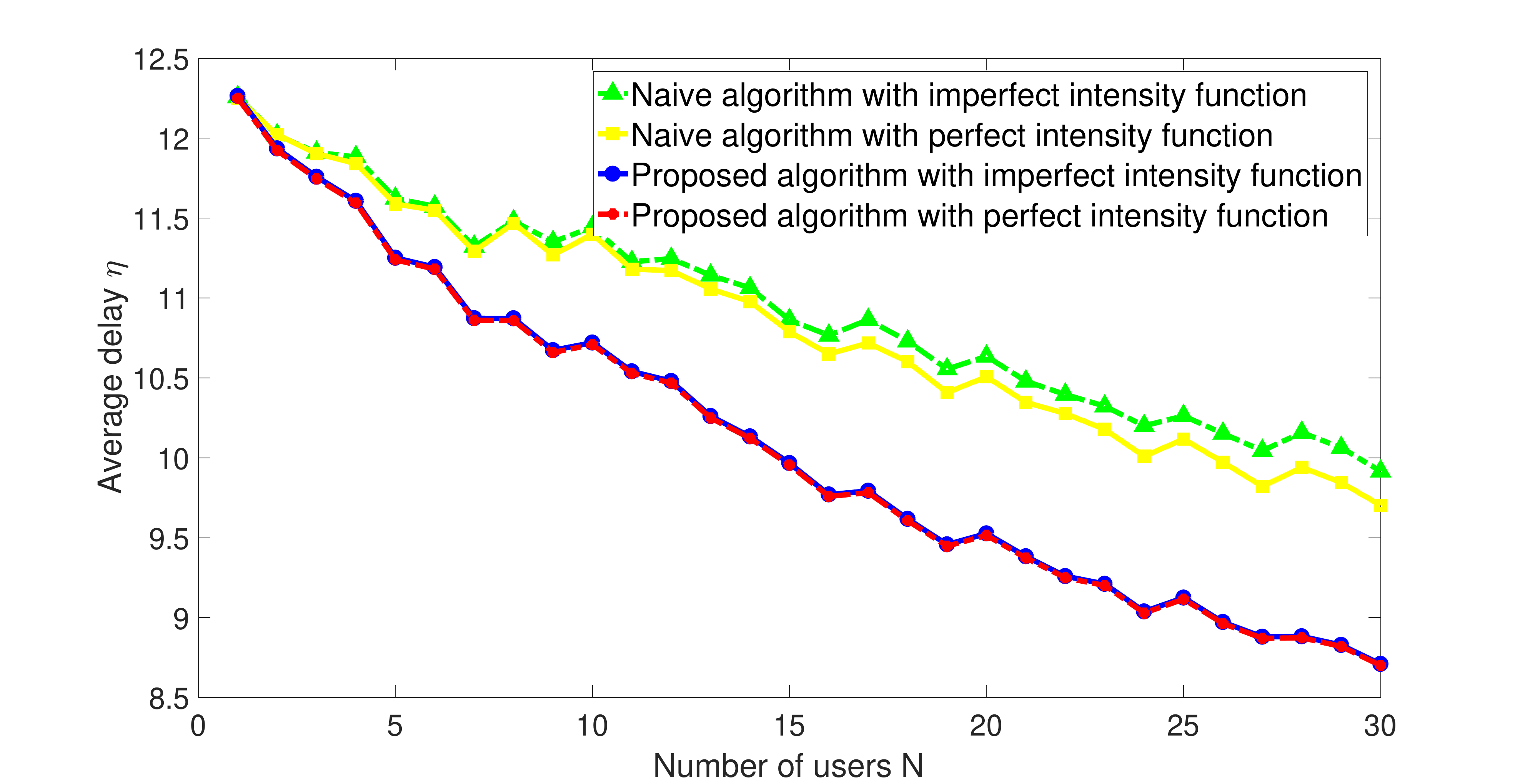}
	\caption{Average delay $\eta$ vs. the number of users $N$}
	\label{fig:delay_N}
\end{figure}

\subsubsection{Average delay with broadcasting in transmission phase}\label{bc}

In this part, we compare the average delay in scenarios in which the broadcasting is allowed and not allowed in the transmission phase. In the simulations, we set select $N = 25$, $M = 100$, $\mu = 30$, and $\beta = 0.25$. First, we run the estimation and caching algorithm to update the caching state at the beginning of each operation cycle. After all users' caches are updated, we start the simulation of the transmission phase. For each time frame $t_{\kappa}$, each user generates $1000$ file requests based on the arrival function estimated in the current operation cycle.

Table \ref{tabel:broadcast} provides the average system delay in simulations for both the proposed algorithm and the naive algorithm. For both caching algorithms, broadcasting can help to reduce the average delay. And the simulation results also verify that our proposed algorithm can work under different transmission schemes. We also notice that reduction in transmission delays are limited in both algorithms, because broadcasting requires the circumstances in which requests for the same content arrive simultaneously, which heavily depends on the user density and content popularity distribution.


\begin{table}[]
	\caption{Average delay $\hat{\eta}$}
	\label{tabel:broadcast}
	\centering
	\resizebox{\columnwidth}{!}{
		\begin{tabular}{cclclc}
			\cline{1-4}
			\multicolumn{1}{|c|}{}                   & \multicolumn{2}{c|}{Broadcasting allowed} & \multicolumn{1}{c|}{Only one link allowed} &  &                      \\ \cline{1-4}
			\multicolumn{1}{|c|}{Proposed Algorithm} & \multicolumn{2}{c|}{6.9358}               & \multicolumn{1}{c|}{7.1004}                &  &                      \\ \cline{1-4}
			\multicolumn{1}{|c|}{Naive Algorithm}    & \multicolumn{2}{c|}{7.4618}               & \multicolumn{1}{c|}{7.5320}                &  &                      \\ \cline{1-4}
			\multicolumn{1}{l}{}                     & \multicolumn{1}{l}{}          &           & \multicolumn{1}{l}{}                       &  & \multicolumn{1}{l}{}
		\end{tabular}
	}
\end{table}

\subsubsection{Average delay based on popularity models}\label{pplrty}

In our implementation in this part, the total number of files is $M = 100$, and we set the cell radius as $R = 1.8$ km, base station transmission power as $P_b = 23$ dB, users' transmission power as $P_u = 20$ dB, and package size of each file as $F = 11.29$ bits.

In this part, we extend the comparison of the proposed caching algorithm, naive algorithm and probabilistic algorithm to different popularity models. We assume in this case that the users' preferences are known to the system, and both algorithms can have direct access to the popularity rank of files. Hence, intensity estimation is not applied. The requests are generated based on Zipf distribution.

Here, we discuss the performance in terms of overall average delay. We consider two cases: independent popularity, in which users have independent preferences, and identical popularity model, in which users have identical preferences for files. In the case of independent preferences, each user has a different rank for the files, so the cache at each user could be different.  

In the case of naive algorithm with identical preferences, users get the files they do not have via cellular downlink from the base station. Therefore, the gap between the two curves using naive algorithm in Figs. \ref{fig:delay_beta} - \ref{fig:delay_N_old} (which will be discussed in detail next) demonstrates the benefit of enabling D2D communications. By allowing D2D transmission, the users far away from the base station can get files from their neighbors, which helps to significantly reduce the delay. Also, as shown in Figs. \ref{fig:delay_beta} - \ref{fig:delay_N_old}, the performance of the probabilistic algorithm for both popularity models is always in between the performances of the naive algorithm achieved with the identical popularity model and independent popularity model.  As discussed above, in the case of identical popularities, naive algorithm actually does not lead to file exchanges via D2D links because all users store the same set of files (i.e., the most popular $\mu$ files, which are the same for all users). However, with the probabilistic algorithm, though the caching decisions highly rely on the popularity distribution of the files, there are chances for less popular files to be cached. Due to this, different files can be cached by different users, which increases the possibility to reduce the overall transmission delay via D2D transmissions. In the case of the independent popularity model, the naive algorithm assigns different top ranked files to users' cache space based on their preferences, and the transmission delay for each user's most popular files is guaranteed to be reduced and also D2D links can be utilized for file exchange. On the other hand, the probabilistic algorithm risks missing the most popular files due to its probabilistic nature, and so the performance is comparatively worse. Finally, as will be discussed next, the proposed delay-aware caching algorithm outperforms both naive and probabilistic caching algorithms.



In Fig. \ref{fig:delay_beta}, we set $N=25$, $\mu=30$ and plot the average delay $\eta$ as a function of the Zipf exponent $\beta$. As $\beta$ increases, the popularity difference increases. When $\beta=0$, the users request all files with equal probability; when $\beta\rightarrow +\infty$, each user only requests its most favorite file. Therefore, we only need to concentrate on the delay performance of fewer popular files as $\beta$ increases, and it becomes easier to achieve better delay performance with limited caching space. That is the reason for having monotonically decreasing curves in Fig. \ref{fig:delay_beta}. Another observation is that our algorithm is very robust to the popularity setting. Compared to the curves using the naive algorithm, identical popularity model only slightly raises the delay of our algorithm. If a node can get a popular file from its near neighbor, then caching some less popular files might give better delay improvement. Therefore, our algorithm can enable D2D transmission even in an identical popularity model, which guarantees the robustness and low transmission delay. As to the probabilistic algorithm, though the identical popularity model does not worsen its performance, the average delay achieved by the probabilistic algorithm is not competitive compared to the proposed algorithm.

\begin{figure}
\centering
\includegraphics[width=\figsize\linewidth]{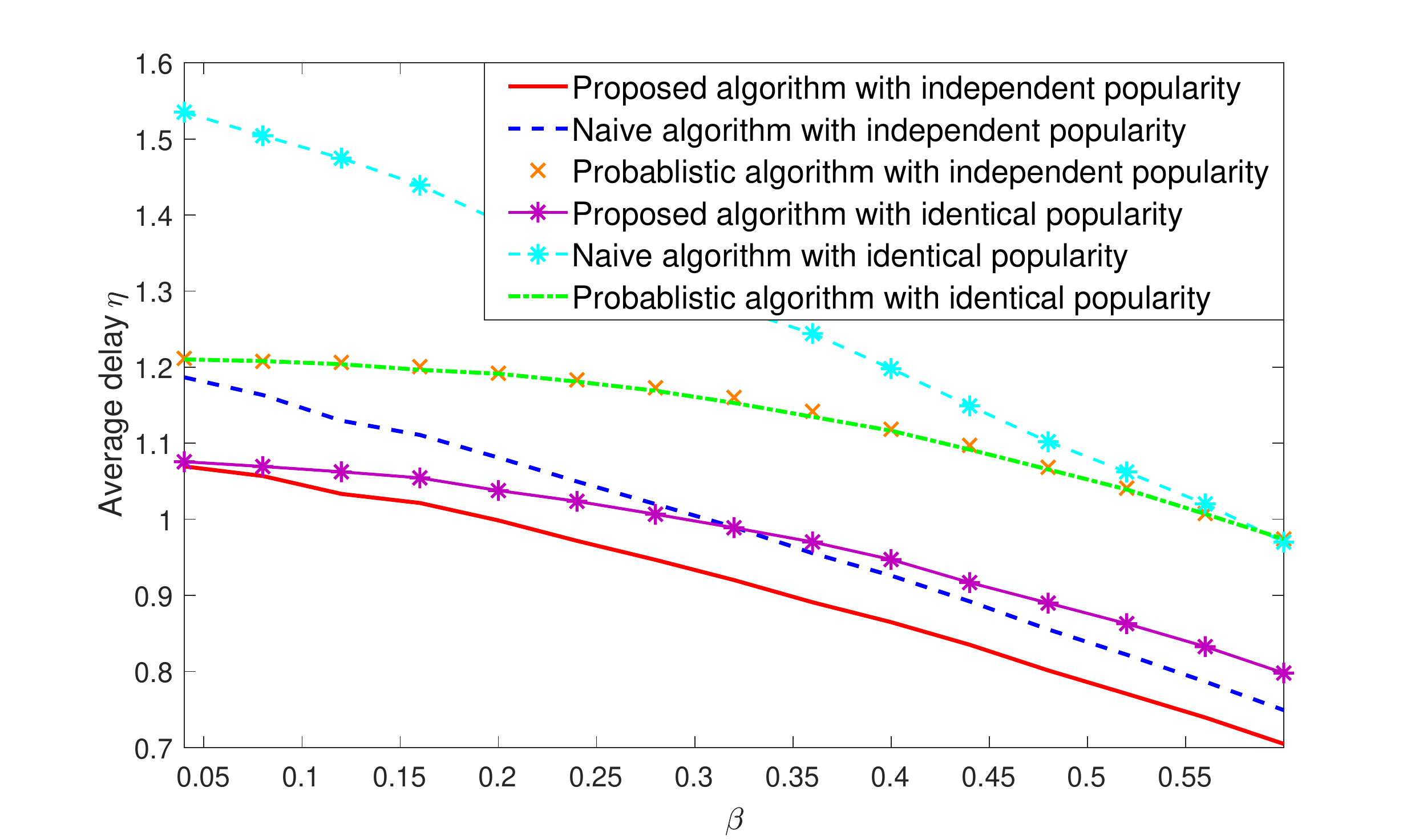}
\caption{Average delay $\eta$ vs. Zipf exponent $\beta$}
\label{fig:delay_beta}
\end{figure}


In Fig. \ref{fig:Mu_study_latest}, we select $\beta=0.1$, $N=25$ and plot the average delay as a function of the cache size $\mu$. When $\mu$ is small, the delay difference between different algorithms and different popularity settings is small. In such a situation, both algorithms cache the most popular files. As $\mu$ increases, the difference in performance increases. As we have mentioned in Algorithm \ref{Algorithm2}, our algorithm searches for the optimal $<$file,user$>$ pair that provides the maximum delay improvement, and this mechanism guarantees a very sharp decrease at the beginning. After exceeding a threshold, further increasing the caching size reduces the performance difference, because the system gets enough caching size to cache most of the popular files. Overall, Fig. \ref{fig:Mu_study_latest} shows that our algorithm can achieve better delay performance with limited cache size.


\begin{figure}
\centering
\includegraphics[width=\figsize\linewidth]{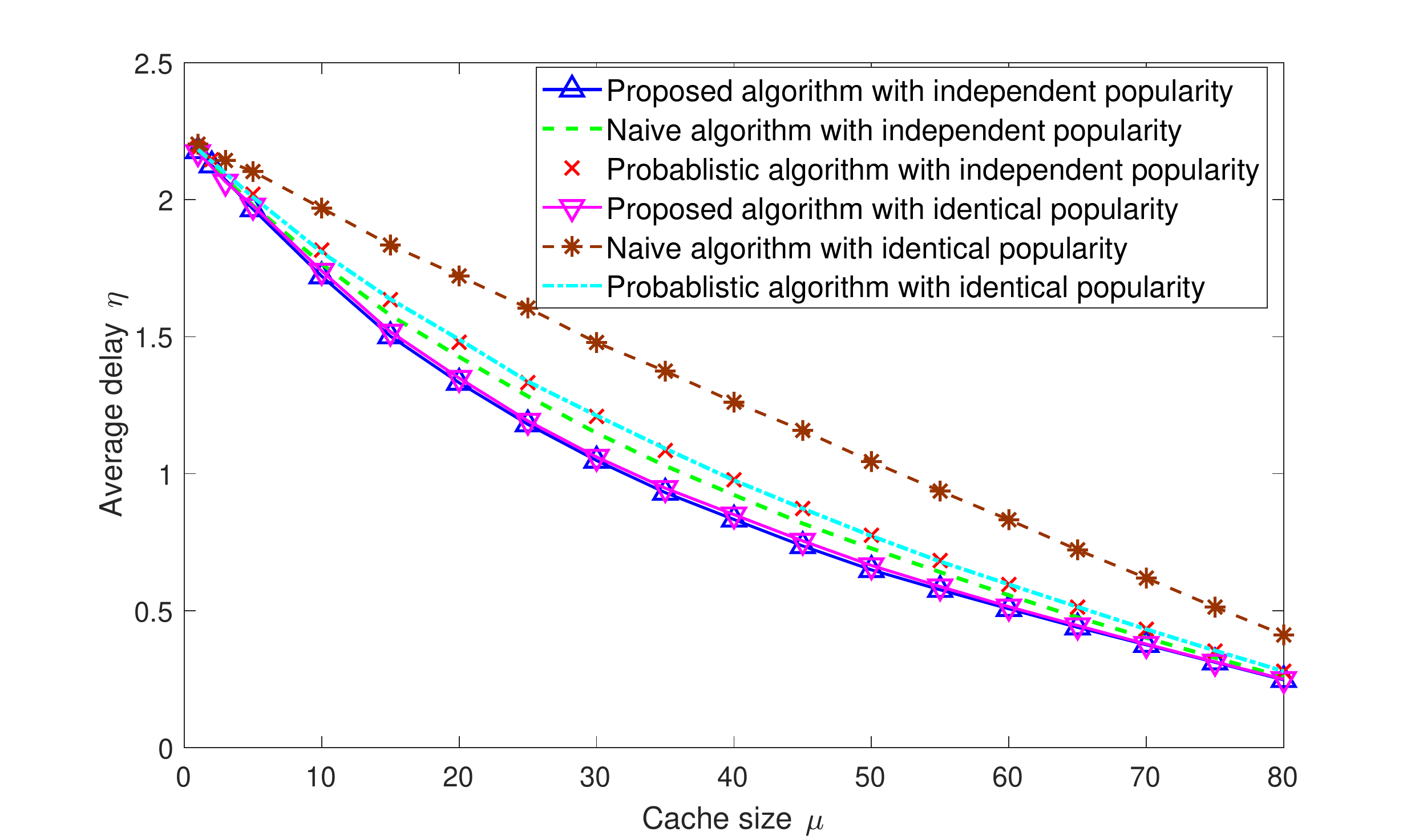}
\caption{Average delay $\eta$ vs. cache size $\mu$}
\label{fig:Mu_study_latest}
\end{figure}


In Fig. \ref{fig:delay_N_old}, we select $\beta=0.1$, $\mu=30$ and plot the average delay as a function of the number of users $N$. For the curve using the naive algorithm with identical popularity model, having more users does not affect the average delay because each user works in cellular mode and receives the files from the base station. For other curves, increased number of users enables more chances for D2D communication, and as a result the average delay decreases. Compared with the naive algorithm and probabilistic algorithm, our algorithm can achieve better performance, especially when the number of users is large.



\begin{figure}
\centering
\includegraphics[width=\figsize\linewidth]{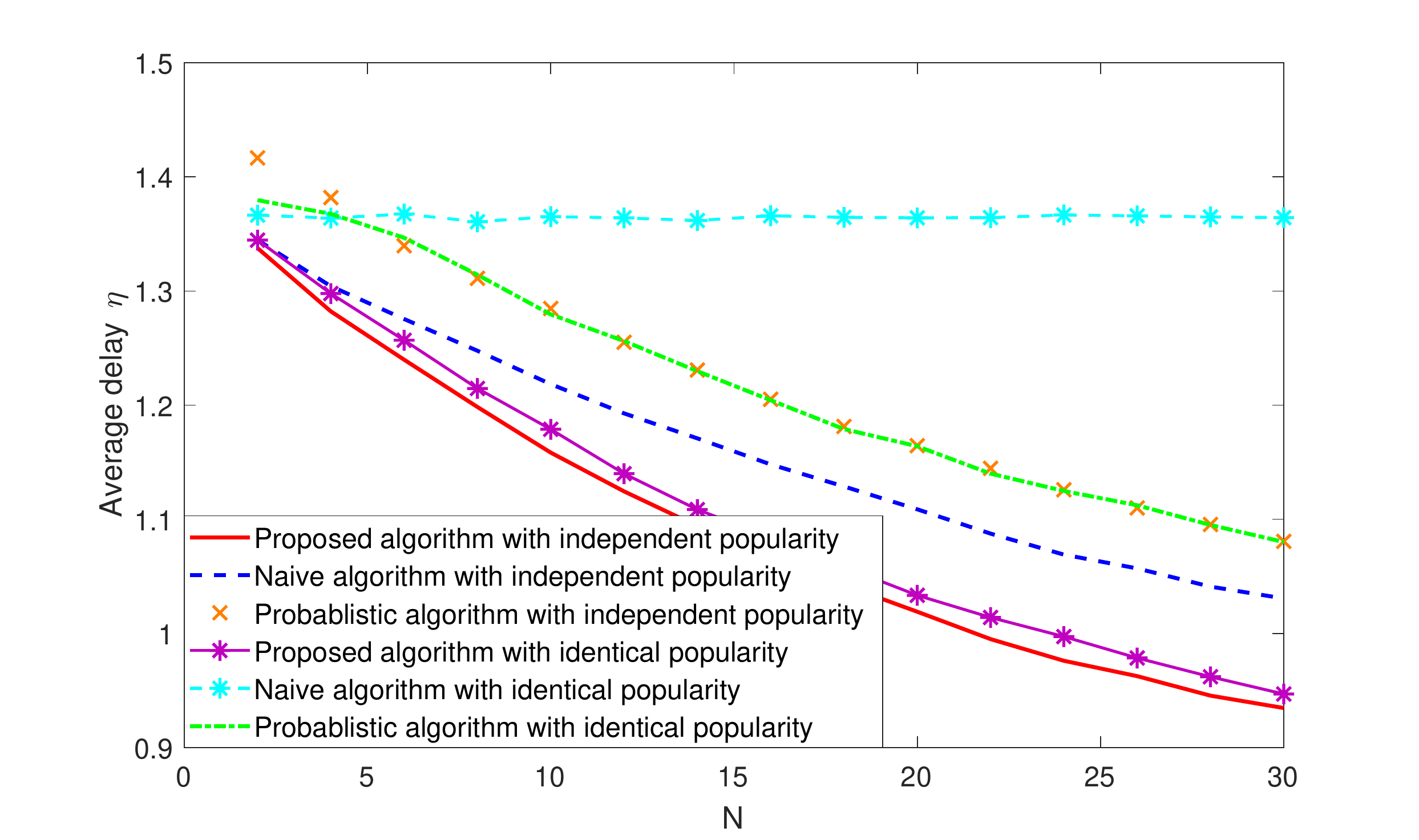}
\caption{Average delay $\eta$ vs. the number of users $N$}
\label{fig:delay_N_old}
\end{figure}

\section{Conclusion} \label{Sec:conclusion}
In this paper, we have proposed a learning-based caching algorithm for D2D cellular networks, which minimizes the weighted average delay. First, we have learned the intensity function of the users' requests by using a kernel estimator and characterized average transmission delay of a request. Then, we have formulated the delay minimization problem and developed our algorithm which can solve the weighted average delay minimization problem efficiently in a general scenario in which the distributions of fading coefficients and system parameters change over time. We  have investigated the performance of both the kernel estimator and caching algorithm. To demonstrate the performance of the kernel estimator, we analyzed the estimation error with changing minimum intensity and the number of time frames. To investigate the performance of the proposed caching algorithm, we have compared it with a naive algorithm which simply caches the most popular files at each user, and a probabilistic algorithm in which the users cache files based on their popularities. And the comparisons have been made on models with unknown and known popularities of the content files. For the model with unknown popularity information, our caching algorithm operates with the kernel estimator. For the model with known popularity, the caching algorithms have been applied to two different popularity models. For both scenarios with and without the popularity information, we have shown that the proposed algorithm is more robust to variations in the popularity models, and can achieve better performance because the proposed algorithm can more effectively take advantage of D2D communications. Also, the impact of the popularity parameter, caching size and number of users is further identified via numerical results.

\bibliographystyle{ieeetr}
\bibliography{D2D_caching}

\begin{thebibliography}{10}

\bibitem{Cisco}
Cisco, ``Cisco visual networking index: Global mobile data traffic forecast
  update 2016--2022 white paper,'' {\em URL:
  https://www.cisco.com/c/en/us/solutions/collateral/service-provider/visual-networking-index-vni/mobile-white-paper-c11-520862.pdf},
  2017.

\bibitem{WL_caching_overview}
G.~Paschos, E.~Bastug, I.~Land, G.~Caire, and M.~Debbah, ``Wireless caching:
  Technical misconceptions and business barriers,'' {\em {IEEE} Commun. Mag.},
  vol.~54, pp.~16--22, August 2016.

\bibitem{LivingOnTheEdge}
E.~Bastug, M.~Bennis, and M.~Debbah, ``Living on the edge: {The} role of
  proactive caching in {5G} wireless networks,'' {\em {IEEE} Commun. Mag.},
  vol.~52, no.~8, pp.~82--89, 2014.

\bibitem{ClusterContentCaching}
Z.~Zhao, M.~Peng, Z.~Ding, W.~Wang, and H.~V. Poor, ``Cluster content caching:
  An energy-efficient approach to improve quality of service in cloud radio
  access networks,'' {\em {IEEE} J. Select. Areas Commun.}, vol.~34,
  pp.~1207--1221, May 2016.

\bibitem{PhyCaching5G}
W.~Han, A.~Liu, and V.~K.~N. Lau, ``Phy-caching in {5G} wireless networks:
  Design and analysis,'' {\em {IEEE} Commun. Mag.}, vol.~54, pp.~30--36, August
  2016.

\bibitem{MulticastingCodingRandomDemands}
M.~Ji, A.~M. Tulino, J.~Llorca, and G.~Caire, ``On the average performance of
  caching and coded multicasting with random demands,'' in {\em International
  Symposium on Wireless Communications Systems (ISWCS)}, pp.~922--926, 2014.

\bibitem{CodingforCaching}
M.~A. Maddah-Ali and U.~Niesen, ``Coding for caching: Fundamental limits and
  practical challenges,'' {\em {IEEE} Commun. Mag.}, vol.~54, pp.~23--29,
  August 2016.

\bibitem{D2D_KB}
B.~Kaufman and B.~Aazhang, ``Cellular networks with an overlaid device to
  device network,'' in {\em Asilomar Conference on Signals, Systems and
  Computers}, pp.~1537--1541, Oct 2008.

\bibitem{jin2017reducing}
Y.~Jin, F.~Liu, X.~Yi, and M.~Chen, ``Reducing cellular signaling traffic for
  heartbeat messages via energy-efficient {D2D} forwarding,'' in {\em IEEE 37th
  International Conference on Distributed Computing Systems (ICDCS)},
  pp.~1301--1311, 2017.

\bibitem{D2D_survey}
A.~Asadi, Q.~Wang, and V.~Mancuso, ``A survey on device-to-device communication
  in cellular networks,'' {\em IEEE Communications Surveys \& Tutorials},
  vol.~16, pp.~1801--1819, Fourthquarter 2014.

\bibitem{mode_selection_DK}
K.~Doppler, C.-H. Yu, C.~Ribeiro, and P.~Janis, ``Mode selection for
  device-to-device communication underlaying an {LTE}-advanced network,'' in
  {\em IEEE Wireless Communications and Networking Conference (WCNC)},
  pp.~1--6, April 2010.

\bibitem{Joint_selection}
G.~Yu, L.~Xu, D.~Feng, R.~Yin, G.~Li, and Y.~Jiang, ``Joint mode selection and
  resource allocation for device-to-device communications,'' {\em IEEE Trans.
  Commun.}, vol.~62, pp.~3814--3824, Nov 2014.

\bibitem{li2016device}
Y.~Li, M.~C. Gursoy, and S.~Velipasalar, ``Device-to-device communication in
  cellular networks under statistical queueing constraints,'' in {\em IEEE
  International Conference on Communications (ICC)}, pp.~1--6, 2016.

\bibitem{WirelessD2DCachingNetworks}
M.~Ji, G.~Caire, and A.~F. Molisch, ``Wireless device-to-device caching
  networks: {Basic} principles and system performance,'' {\em {IEEE} J. Select.
  Areas Commun.}, vol.~34, pp.~176--189, Jan 2016.

\bibitem{MobileCachingD2DContentDelivery}
H.~J. Kang, K.~Y. Park, K.~Cho, and C.~G. Kang, ``Mobile caching policies for
  device-to-device ({D2D}) content delivery networking,'' in {\em IEEE
  Conference on Computer Communications Workshops (INFOCOM WKSHPS)},
  pp.~299--304, April 2014.

\bibitem{golrezaei2012wireless}
N.~Golrezaei, A.~G. Dimakis, and A.~F. Molisch, ``Wireless device-to-device
  communications with distributed caching,'' in {\em IEEE International
  Symposium on Information Theory (ISIT)}, pp.~2781--2785, 2012.

\bibitem{chen2017optimal}
B.~Chen, C.~Yang, and Z.~Xiong, ``Optimal caching and scheduling for
  cache-enabled {D2D} communications,'' {\em IEEE Communications Letters},
  vol.~21, no.~5, pp.~1155--1158, 2017.

\bibitem{chen2017probabilistic}
Z.~Chen, N.~Pappas, and M.~Kountouris, ``Probabilistic caching in wireless
  {D2D} networks: Cache hit optimal versus throughput optimal,'' {\em IEEE
  Communications Letters}, vol.~21, no.~3, pp.~584--587, 2017.

\bibitem{chen2016d2d}
Z.~Chen and M.~Kountouris, ``{D2D} caching vs. small cell caching: Where to
  cache content in a wireless network?,'' in {\em IEEE 17th International
  Workshop on Signal Processing Advances in Wireless Communications (SPAWC)},
  pp.~1--6, 2016.

\bibitem{tanner2001statistics}
M.~A. Tanner and M.~T. Wells, {\em Statistics in the 21st Century}.
\newblock CRC Press, 2001.

\bibitem{fan1996local}
J.~Fan and I.~Gijbels, {\em Local polynomial modelling and its applications:
  monographs on statistics and applied probability}, vol.~66.
\newblock CRC Press, 1996.

\bibitem{silverman1986density}
B.~W. Silverman, {\em Density estimation for statistics and data analysis},
  vol.~26.
\newblock CRC press, 1986.

\bibitem{eubank1999nonparametric}
R.~L. Eubank, {\em Nonparametric regression and spline smoothing}.
\newblock CRC press, 1999.

\bibitem{donoho1995wavelet}
D.~L. Donoho, I.~M. Johnstone, G.~Kerkyacharian, and D.~Picard, ``Wavelet
  shrinkage: asymptopia?,'' {\em Journal of the Royal Statistical Society.
  Series B (Methodological)}, pp.~301--369, 1995.

\bibitem{rosenblatt1956remarks}
M.~Rosenblatt {\em et~al.}, ``Remarks on some nonparametric estimates of a
  density function,'' {\em The Annals of Mathematical Statistics}, vol.~27,
  no.~3, pp.~832--837, 1956.

\bibitem{ghosh2007uplink}
A.~Ghosh, R.~Ratasuk, W.~Xiao, B.~Classon, V.~Nangia, R.~Love, D.~Schwent, and
  D.~Wilson, ``Uplink control channel design for {3GPP} {LTE},'' in {\em 18th
  IEEE International Symposium on Personal, Indoor and Mobile Radio
  Communications (PIMRC)}, pp.~1--5, 2007.

\bibitem{wang2017uplink}
L.~Wang, Y.~Matsumura, K.~Takeda, X.~Hou, and S.~Nagata, ``Uplink control
  channel for {5G} new {RAT},'' in {\em 11th International Conference on Signal
  Processing and Communication Systems (ICSPCS)}, pp.~1--7, 2017.

\bibitem{love2008downlink}
R.~Love, R.~Kuchibhotla, A.~Ghosh, R.~Ratasuk, W.~Xiao, B.~Classon, and
  Y.~Blankenship, ``Downlink control channel design for {3GPP} {LTE},'' in {\em
  IEEE Wireless Communications and Networking Conference ({WCNC}), pp.
  813--818, 2008}.

\bibitem{miao2017physical}
H.~Miao and M.~Faerber, ``Physical downlink control channel for 5{G} new
  radio,'' in {\em European Conference on Networks and Communications (EuCNC)},
  pp.~1--5, 2017.

\bibitem{UL17}
3GPP, R1-1700623, {\em {UL} control channel structure in long duration},
  January 2017.
\newblock NTT DOCOMO.

\bibitem{DL17}
3GPP, R1-1700623, {\em On transmission schemes of {NR} physical downlink
  control channel}, February 2017.
\newblock Intel Corporation.

\bibitem{berman1989estimating}
M.~Berman and P.~Diggle, ``Estimating weighted integrals of the second-order
  intensity of a spatial point process,'' {\em Journal of the Royal Statistical
  Society. Series B (Methodological)}, pp.~81--92, 1989.

\bibitem{wand1994kernel}
M.~P. Wand and M.~C. Jones, {\em Kernel smoothing}.
\newblock Crc Press, 1994.

\bibitem{bowman1984alternative}
A.~W. Bowman, ``An alternative method of cross-validation for the smoothing of
  density estimates,'' {\em Biometrika}, pp.~353--360, 1984.

\end{thebibliography}

\end{document}